\tikzset{->,
    >=stealth,
    node distance=7cm,
    every state/.style={thick}
}
\newtheorem{theorem}{Theorem}
\newtheorem{lemma}{Lemma}
\newtheorem{problem}{Problem}
\newtheorem{proposition}{Proposition}
\newcommand{\actions}{\ensuremath{\mathsf{actions}}\xspace}
\newcommand{\choices}{\ensuremath{\mathsf{choices}}\xspace}
\newcommand{\states}{\ensuremath{\mathsf{states}}\xspace}
\newcommand{\successful}{\ensuremath{\mathsf{successful}}\xspace}
\newcommand{\orch}{\ensuremath{\gamma}}
\newcommand{\prefixes}{\ensuremath{\mathsf{prefixes}}\xspace}
\newcommand{\histories}{\ensuremath{\mathsf{histories}}\xspace}
\newcommand{\supp}{\ensuremath{\mathsf{Supp}}\xspace}
\newcommand{\win}{\ensuremath{\mathit{Win}}}
\newcommand{\preimg}{\ensuremath{\mathit{PreC}}}
\newcommand{\stoc}{\ensuremath{\tilde}}
\newcommand{\stocC}{\ensuremath{{\stoc\C}}}
\newcommand{\stoch}{\ensuremath{{\stoc{h}}}}
\newcommand{\stocS}{\ensuremath{\stoc\S}}
\newcommand{\cost}{\ensuremath{\mathsf{cost}}\xspace}
\newcommand{\Paths}{\ensuremath{\mathsf{Paths}}}
\newcommand{\prob}{\ensuremath{\mathbb{P}}}
\newcommand{\expected}{\ensuremath{\mathbb{E}}}
\newcommand{\deadsigma}{\ensuremath{\sigma_u}\xspace}
\newcommand{\lastt}{\ensuremath{\mathsf{last}}\xspace}
\DeclareMathSymbol{\dv}{\mathord}{operators}{"3A}
\newcommand{\tk}{\ensuremath{\mathsf{act}}\xspace}
\title{Composition of Nondeterministic and Stochastic Services for \LTLf Task Specifications}
\author[1,2]{Giuseppe De Giacomo}
\author[3]{Marco Favorito}
\author[4]{Luciana Silo}
\affil[1]{University of Oxford, UK}
\affil[2]{Sapienza University of Rome, Italy}
\affil[3]{Banca d'Italia, Italy}
\affil[4]{Camera dei Deputati, Italy}
\date{}
\begin{document}

\maketitle

\begin{abstract}
In this paper, we study the composition of services so as to obtain runs satisfying a task specification in Linear Temporal Logic on finite traces (\LTLf). We study the problem in the case services are nondeterministic and the \LTLf specification can be exactly met, and in the case services are stochastic, where we are interested in maximizing the probability of satisfaction of the \LTLf specification and, simultaneously, minimizing the utilization cost of the services. To do so, we combine techniques from \LTLf synthesis, service composition \emph{à la} Roman Model, reactive synthesis, and bi-objective lexicographic optimization on MDPs. This framework has several interesting applications, including Smart Manufacturing and Digital Twins.
\end{abstract}

\section{Introduction}

The service-oriented computing (SOC) paradigm uses services to support the development of
rapid, low-cost, interoperable, evolvable, and massively distributed applications. Services are considered autonomous, platform-independent entities that can be described, published, discovered, and loosely coupled in novel ways \cite{papazoglou2007service}.
Service composition, i.e. the ability to generate new, more useful services from existing ones, is an active ﬁeld of research in the SOC area and has been actively investigated for over a decade.

Particularly interesting in this context is the so-called Roman Model \cite{berardi2003automatic,DBLP:conf/icsoc/BerardiCGM05,brafman2017service,degiacomo2014automated} where services are \emph{conversational}, i.e., have an internal state and are modelled as finite state machines (FSM), where at each state the service offers a certain set of actions, and each action changes the state of the service in some way.
The designer is interested in generating a new service, called \emph{target}, from the set of existing services specified using an FSM, too. The goal is to see whether the target can be satisfied by properly orchestrating the work of the component service and building a scheduler called the orchestrator that will use actions provided by existing services to implement action requests. 

In this paper, we consider a variant of the Roman Model where the composition is \emph{task-oriented}. Specifically, we are given a task, and we want to synthesize an orchestrator that, on the one hand, reactively chooses actions to form a sequence that satisfies the task and, on the other hand, delegates each action to an available service in such a way that at the end of the sequence, all services are in their final states. We do this in two frameworks. In the first framework, we consider the available services as nondeterministic, in the sense that when the orchestrator delegates to them an action, they will change state in a nondeterministic (devilish vs. angelic) way, as studied, e.g. in \cite{DBLP:conf/icsoc/BerardiCGM05}. In the second framework, we consider services whose response is stochastic, in the sense that delegated actions change the service state according to a probability distribution, as studied, e.g., in \cite{yadav2011decision,brafman2017service}.

Our composition is task-oriented, making it different from the Roman model and more similar to Planning in AI \cite{GeffnerBonet13,McIlraithS02,pistore2005automated}.
However, we draw from the work on declarative process modelling in Business Process Management (BPM) in which the task specification is expressed in Linear Temporal Logic on finite traces (\LTLf) \cite{de2013linear} with the so-called \declare assumption that only one action can be selected at each point in time \cite{pesic2007declare}. 
This gives us a rich way to specify dynamic tasks that extend over time as testified by the \declare paradigm \cite{pesic2007declare} itself, and the recent success of such kind of specification in an advanced form of BPM \cite{ABPMmanifesto23}.

In both the nondeterministic and the stochastic case, we give a formal definition of composition and a provably correct technique to actually solve the composition problem and obtain the orchestrator. Both techniques are  readily implementable.
In the nonderministic framework,  the solution technique is based on finding a winning strategy for a two-player game over a particular \DFA game, as done, for example, in \LTLf synthesis \cite{DBLP:conf/ijcai/GiacomoV15}.
In the stochastic framework, the solution technique relies on solving a bi-objective lexicographic optimization \cite{busatto2023bi} over a special Markov Decision Process \cite{puterman1994markov}, allowing to minimize the services' utilization costs while guaranteeing maximum probability of task satisfaction.
Although this paper has a foundational nature, we observe that these kinds of task-oriented compositions are increasingly becoming important in smart manufacturing \cite{DEGIACOMO2023103916}. 
We briefly discuss this point in the final section of the paper.
%


\section{Preliminaries}
\label{sec:preliminaries}
\textbf{LTL$_f$} is a variant of Linear Temporal
Logic (\LTL) interpreted over finite traces, instead of infinite ones 
\cite{de2013linear}. Given a set $\P$ of atomic propositions, \LTLf formulas $\varphi$ are defined by $\varphi ::= a \mid \lnot \varphi \mid \varphi \wedge \varphi \mid \Next \varphi \mid \varphi \lUntil \varphi$,  where $a$ denotes an atomic proposition in $\P$, $\Next$ is the next
operator, and $\lUntil$ is the until operator. We use abbreviations for
other Boolean connectives, as well as the following: eventually
as $\Diamond \varphi \equiv true \lUntil \varphi$; always as $\square \varphi \equiv \lnot \Diamond \lnot \varphi$;
weak next as $\Wnext \varphi \equiv
\lnot \Next \lnot\varphi$ (note that, on finite traces, $\lnot \Next \varphi$ is not equivalent to
$\Next \lnot \varphi$); and weak until as $\varphi_1 \Wuntil \varphi_2 \equiv (\varphi_1 \Until \varphi_2 \lor
  \Box \varphi_1)$ ($\varphi_1$ holds until $\varphi_2$ or forever).
\LTLf formulas are interpreted on finite traces $a =
a_0 \dots a_{l-1}$ where $a_i$ at instant $i$ is a propositional interpretation over the alphabet $2^\P$, and $l$ is the length of the trace.
An \LTLf formula can be transformed into equivalent \emph{nondeteministic automata} (\NFA) in at most EXPTIME and into an equivalent and \emph{deterministic finite automata} (\DFA) in at most 2EXPTIME \cite{de2013linear}. 
A \DFA is a tuple $\A_\varphi = \langle 2^\P, Q, q_0, F, \delta\rangle$ where: \myi $2^\P$ is the alphabet, \myii $Q$ is a finite set of states, \myiii $q_0$ is the initial state, \myiv $F\subseteq Q$ is the set of accepting states and \myv $\delta:Q\times 2^\P \to Q$ is a total transition function. Note that the \DFA alphabet is the same as the set of traces that satisfies the formula $\varphi$.
An \NFA is defined similarly to \DFA except that $\delta$ is defined as a relation, i.e. $\delta\subseteq Q\times 2^\P\times Q$. 
%
%
In particular, \LTLf is used in declarative process specification in BPM, for example in the system  \textsc{DECLARE} \cite{pesic2007declare}. In this specific case it is assumed that only one proposition (corresponding to an action) is true at every time point. That is: $
\xi_\P = \Box(\bigvee_{a\in\Prop}  a) \land \Box(\bigwedge_{a,b\in\Prop, a
  \neq b} a \limp \lnot b) $.
  We call this the  \emph{\declare assumption}, and we do adopt it in this paper.

\noindent
\textbf{Markov Decision Process.}
A \textit{Markov Decision Process} (MDP) is a tuple $M =
(S, A, P, s_0)$, where: \myi $S$ is a finite set of states, \myii $s_0$ is the initial state, \myiii $A$ is a finite set of actions, and  \myiv$P : (S \times A) \to \Delta(S)$ is the transition probability function.
%
An infinite path $\rho \in (S \times A)^\omega$ is a sequence $\rho = s_0a_1s_1a_2 \dots$, where $s_i \in S$ and $a_{i+1} \in A$ for all $i \in \mathbb{N}$. Similarly, a finite run $\rho \in (S \times A)^* \times S$ is a finite sequence $\rho = s_0a_1s_1a_2 \dots a_{m}s_m$. For any path $\rho$ of length at least $j$ and any $i \leq j$, we let $\rho[i\dv j]$ denote the subsequence $s_ia_{i+1}s_{i+1}a_{i+2} \dots a_{j}s_j$. We use $\Paths_\M^\omega = (S \times A)^\omega$ and $\Paths_\M = (S \times A)^* \times S$ to denote the set of infinite and finite paths, respectively.
A policy $\pi : \Paths_\M \rightarrow A$ maps a finite path $\rho \in \Paths_\M$ to an action $a\in A$. 
%
%
We denote with $\Paths_{\M_\pi}$ the set of finite paths over $\M$ whose actions are compatible with $\pi$.
Given a finite path $\rho = s_0a_1 \dots a_ms_m$, the \emph{cylinder} of $\rho$, denoted by $\Paths_{\M}^\omega(\rho)$, is the set of all infinite paths starting with prefix $\rho$. 
%
%
The \(\sigma\)-algebra associated with MDP $\M$ and a fixed policy $\pi$ is the smallest \(\sigma\)-algebra that contains the cylinder sets \(\Paths_{\M_\pi}^\omega(\rho)\) for all \(\rho \in \Paths_{\M_\pi}\). 
For a state \(s\) in \(S\), a measure is defined for the cylinder sets as $\prob_{\M_\pi,s}(\Paths^\omega_{\M_\pi}(s_0 a_1 s_1 \dots a_m s_m)) = \prod_{k=0}^{m-1} P(s_{k+1}|s_k,a_{k+1})$ if $s_0 = s$ and for all $k$, $a_{k+1} = \pi(\rho[0\dv k])$, otherwise $0$.
We also have \(\prob_{\M_\pi,s}(\Paths^\omega_{\M_\pi,s}(s)) = 1\) and \(\prob_{\M_\pi,s}(\Paths^\omega_{\M_\pi,s}(s'))\!\!=\!\!0\) for \(s'\!\!\neq\!\!s\). 
This can be extended to a unique probability measure \(\prob_{\M_\pi,s}\) on the aforementioned \(\sigma\)-algebra.
In particular, if \(\R \subseteq \Paths_{\M_\pi}\) is a set of finite paths forming pairwise disjoint cylinder sets, then 
$
\prob_{\M_\pi,s}(\bigcup_{\rho \in \R} \Paths^\omega_{\M_\pi}(\rho)) = \sum_{\rho \in \R} \prob_{\M_\pi,s}(\Paths^\omega_{\M_\pi}(\rho))$.
We denote with $\expected_{\M_{\pi},s}(X)$ the expected value of a random variable $X$ with respect to the distribution $\prob_{\M_{\pi},s}$.

\section{Composition of Nondeterministic Services for \LTLf Tasks}
\label{sec:nondet-composition}
We start by presenting our service composition in the case the available services are nondeterministic. Unlike the classical Roman model, we do not have an explicit specification of the target service to realize, but rather, a high-level specification of a task to accomplish expressed as an \LTLf formula. 
We want to accomplish such a task despite the available services having nondeterministic behaviour, as in \cite{DBLP:conf/icsoc/BerardiCGM05}.

\subsection{Nondeterministic Services Framework}
\label{sec:nonstochastic-formalization}
In the Roman Model~\cite{berardi2003automatic}, each \emph{(available) service} is defined as a tuple $\S = \langle \Sigma, A, \sigma_0, F, \delta \rangle$ where:
\myi $\Sigma$ is the finite set of service states,
\myii $A$ is the finite set of services' actions,
\myiii $\sigma_0\in \Sigma$ is the initial state,
\myiv $F\subseteq \Sigma$ is the set of final states  (i.e., states in which the computation may stop but does not necessarily have to),
and \myv $\delta\subseteq \Sigma\times A \times \Sigma$
is the service transition relation.
We use the notation $\sigma \xrightarrow{a} \sigma'$
and $(\sigma, a,\sigma') \in \delta$ interchangeably
when $\delta$ is clear from the context.
For convenience, we define $\delta(\sigma, a) = \{\sigma' \mid (\sigma,a,\sigma')\in \delta\}$,
and we assume that for each state $\sigma\in\Sigma$ and each action $a\in A$, there exist $\sigma'\in \Sigma$ such that $(\sigma,a,\sigma')\in \delta$ (possibly $\sigma'$ is an error state $\deadsigma$ that will never reach a final state). 
Actions in $A$ denote interactions between service and clients.
The behaviour of each available service is described in terms of a finite transition system that uses only actions from $A$.
%


Consider a task specification $\varphi$ expressed in \LTLf over the set of propositions $A$, 
and consider a community of $n$ services $\C = \{\S_1, \dots, \S_n\}$, where each set of actions $A_i\subseteq A$.
%
%
An infinite trace of $\C$ is an infinite alternating sequence of the form $t=(\sigma_{10}\dots\sigma_{n0}),\allowbreak(a_1,o_1),(\sigma_{11}\dots\allowbreak\sigma_{n1}),(a_2,o_2)\dots$, 
where 
for every $0 \le k$, we have 
\myi $\sigma_{ik}\in \Sigma_i$ for all $i\in\{1,\dots,n\}$,
\myii $o_k\in \{1,\dots,n\}$,
\myiii $a_k\in A$, and
\myiv for all $i$, $\sigma_{i,k+1} = \delta_i(\sigma_{ik}, a_{k+1})$ if $o_{k+1}=i$, and $\sigma_{i,k+1} = \sigma_{ik}$ otherwise.

A \emph{history of $\C$} is a finite prefix of a trace of $\C$. 
%
With $|h|=m$, we denote the length of such history,
and with $\lastt(h)$, we denote the service state configuration at the last step: $(\sigma_{1m}\dots\sigma_{nm})$.
%
%
%
%
Given a trace $t$, we call $\states(t)$ \emph{sequence of states} of $t$, i.e. $\states(t) = (\sigma_{10}\dots\sigma_{n0}),(\sigma_{11}\dots\sigma_{n1}),\cdots$.
The \emph{choices} of a trace $t$, denoted with $\choices(t)$, is the sequence of actions in $t$, i.e. $\choices(t) = (a_1,o_1),(a_m,o_m), \dots$.
Note that, due to nondeterminism, there might be many traces of $\C$ associated with the same run.
Moreover, we define the \emph{action run} of a trace $t$, denoted with $\actions(t)$, the projection of $\choices(t)$ only to the components in $A$.
$\states$, $\choices$ and $\actions$ are defined also on history $h$, in a similar way.
 Note that both $\choices(h)$ and $\actions(h)$ are empty if $h=(\sigma_{10}\dots\sigma_{n0})$.

An \emph{orchestrator} is a function $\orch: (\Sigma_1\times\cdots\times\Sigma_n)^* \to A\times \{1\dots n\}$ that, given a sequence of states $(\sigma_{10}\dots\sigma_{n0})\dots(\sigma_{1m}\dots\sigma_{nm})$, returns the action to perform $a \in A$, and the service (actually the service index) that will perform it.
%
%
Next, we define when an orchestrator is a composition that satisfies $\varphi$.
%
%
Given a trace $t$, with $\histories(t)$, we denote the set of prefixes of the 
trace $t$ that ends with a services state configuration.
%
%
%
%
A trace $t$ is an \emph{execution} of an orchestrator $\orch$ over $\C$ if for all $k\ge 0$, we have $(a_{k+1},o_{k+1}) = \orch((\sigma_{10}\dots\sigma_{n0})\dots(\sigma_{1k}\dots\sigma_{nk}))$.
%
%
Let $\T_{\orch,\C}$ be the set of such executions.
Note that due to the nondeterminism of the services, we can have many executions for the same orchestrator, despite the orchestrator being a deterministic function.
If $h\in\histories(t)$ for some (infinite) execution $t\in\T_{\orch,\C}$, we call $h$ a finite execution of $\orch$ over $\C$.
%
%
We say that some finite execution $h$ is \emph{successful}, denoted with $\successful(h)$, if the following two conditions hold: (1) $\actions(h)\models\varphi$, and (2) all service states $\sigma_i \in \lastt(\states(h))$ are such that $\sigma_i\in F_i$.
If for execution $t\in \T_{\orch,\C}$ there exist a finite prefix history $h\in\histories(t)$ such that $\successful(h)$, we say that $t$ is successful.
Finally, we say that an orchestrator $\orch$ \emph{realizes the \LTLf specification $\varphi$ with $\C$} if, for all traces $t\in \T_{\orch,\C}$, $t$ is successful.
%
%
%
Note that the orchestrator, at every step, chooses the action and the (index of the) service to which the action is delegated. In doing so, it guarantees the (complete)  sequence of actions satisfies the \LTLf task specification and that at each step, the action is delegated to a service that can actually carry out the action, despite the nondeterminism of the services and, moreover, when the orchestrator stops, all services are left in their final states. 
Then, the composition problem is:
\begin{problem}[Composition for \LTLf Task Specifications]
    \label{prob:nondet-comp}
    Given the pair $(\C, \varphi)$, where $\varphi$ is an \LTLf task specification over the set of propositions $A$, and $\C$ is a community of $n$ services $\C = \{\S_1,\dots,\S_n\}$, compute, if it exists, an orchestrator $\orch$ that realizes $\varphi$.
\end{problem}

\subsection{Nondeterministic Services Solution Technique}
To synthesize the orchestrator we rely on a game-theoretic technique: i.e.: \myi we build a game arena where the \emph{controller} (roughly speaking the orchestrator) and the \emph{environment} (the service community) play as adversaries; \myii we synthesize a strategy for the controller to win the game whatever the environment does; \myiii from this strategy we will build the actual orchestrator.

Specifically we proceed as follows: 
(1) first, from the \LTLf task specification we compute the equivalent \NFA of an \LTLf formula, $\A_\varphi$; 
(2) in this \NFA we can give the control of the transition to the controller, we do so by considering the \NFA, as a \DFA $\A_\tk$ over an extended alphabet, still under the control of the controller;
(3) compute a \emph{product of such  \DFA $\A_\tk$ with the services, obtaining a new \DFA $\A_{\varphi,\C}$} again extending the alphabet with new symbols, which this time are under the control of the environment;
(4) the \DFA obtained is the arena over which we play is the so-called \emph{\DFA game} \cite{DBLP:conf/ijcai/GiacomoV15};
(5) if a solution of such \DFA game is found, from that solution, we can derive an orchestrator that realizes $\varphi$.
 We now detail each step.

\noindent\textbf{Step 1.}
The \NFA of an \LTLf formula can be computed by exploiting a well-known correspondence between \LTLf formulas and automata on finite words \cite{de2013automatic}. 
In particular, using the \textsc{ltl}$_f$2\textsc{nfa} algorithm \cite{BrafmanGP18}, we can compute an \NFA $\A_\varphi = (A, Q, q_0, F, \delta)$ that is equivalent to the specification $\varphi$ which can be exponentially larger than the size of the formula.
Note that the alphabet of the \NFA is $A$ since we assume the specification satisfies the DECLARE assumption: only one action is executed at each time instant.

\noindent\textbf{Step 2.}
From the \NFA of the formula $\varphi$, $\A_\varphi$, which is on the alphabet $A$, we define a \emph{controllable \DFA} on the alphabet $A\times Q$, $\A_\tk = (A\times Q, Q, a_0, F, \delta_\tk)$, where everything is as in $\A_{\varphi}$ except $\delta_\tk$ that is defined as follows: $\delta_\tk(q, (a, q')) = q'$ iff $(q, a, q')\in \delta$.
Notice that if a sequence of actions is accepted by the \NFA $\A_\varphi$ as witnessed by the run $r = q_0,a_1,\dots,q_n$, then the run itself is accepted by $\A_\tk$.
%
%
Intuitively, with the \DFA $\A_\tk$, we are giving to the controller not only the choice of actions but also the choice of transitions of the original \NFA $\A_\varphi$, so that those chosen transitions lead to the satisfaction of the formula.
In other words, for every sequence of actions $a_1,\dots,a_n$ accepted by the \NFA $\A_\varphi$, i.e. satisfying the formula $\varphi$, there exists a corresponding alternating sequence $q_0,a_1,\dots,q_n$ accepted by the \DFA $\A_\tk$, and viceversa.
This means that when we project out the $Q$-component from the accepted sequences of $\A_\tk$, we get a sequence of actions satisfying $\varphi$.
It can be shown that:

\begin{proposition}
    $a_1\dots a_m\in\L(\A_\varphi)$ iff $(a_1,q_1)\dots(a_m,q_m) \in \L(\A_\tk)$, for some $q_1\dots q_m$.
\end{proposition}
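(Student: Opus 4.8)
The plan is to prove the two directions of the biconditional separately, using the definition of $\delta_\tk$ from Step 2 as the bridge between runs of the \NFA $\A_\varphi$ and runs of the controllable \DFA $\A_\tk$. The key observation to establish first is the following correspondence between computations: a sequence $q_0, a_1, q_1, \dots, a_m, q_m$ is a run of $\A_\varphi$ on $a_1 \dots a_m$ (i.e., $q_0$ is the initial state and $(q_{k-1}, a_k, q_k) \in \delta$ for every $k$) if and only if reading $(a_1, q_1)(a_2, q_2)\dots(a_m, q_m)$ from $q_0$ in $\A_\tk$ yields exactly the state sequence $q_0, q_1, \dots, q_m$. This is almost immediate from the defining equivalence $\delta_\tk(q, (a, q')) = q'$ iff $(q, a, q') \in \delta$: one shows by a trivial induction on $m$ that the (deterministic) run of $\A_\tk$ on $(a_1,q_1)\dots(a_m,q_m)$ visits the states $q_0, \dots, q_m$ precisely when each transition $(q_{k-1}, a_k, q_k)$ is present in $\delta$.

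For the forward direction, suppose $a_1 \dots a_m \in \L(\A_\varphi)$. Then there is an accepting run $q_0, a_1, q_1, \dots, a_m, q_m$ of the \NFA with $q_m \in F$. By the correspondence above, $\A_\tk$ on input $(a_1, q_1)\dots(a_m, q_m)$ reaches $q_m$; since $\A_\tk$ has the same accepting set $F$, this word is accepted, so $(a_1,q_1)\dots(a_m,q_m) \in \L(\A_\tk)$ with the witnesses $q_1 \dots q_m$. For the converse, suppose $(a_1, q_1)\dots(a_m, q_m) \in \L(\A_\tk)$ for some $q_1 \dots q_m$. Since $\A_\tk$ is deterministic, the unique run on this word must pass through $q_0, q_1, \dots, q_m$ (here we use that $\delta_\tk(q,(a,q'))$ is defined to be $q'$, so the second component of each input symbol is forced to equal the state reached), and acceptance gives $q_m \in F$. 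By the correspondence, $(q_{k-1}, a_k, q_k) \in \delta$ for all $k$, so $q_0, a_1, q_1, \dots, a_m, q_m$ is an accepting run of $\A_\varphi$ on $a_1 \dots a_m$, hence $a_1 \dots a_m \in \L(\A_\varphi)$.

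One subtlety worth handling carefully is that $\delta_\tk$ as defined is a partial function: $\delta_\tk(q, (a, q'))$ is specified only when $(q, a, q') \in \delta$, and is undefined (or routed to an implicit sink) otherwise. This does not affect the argument, but it should be noted that in the converse direction acceptance of the word already guarantees the run never falls into an undefined transition, so the state sequence $q_0, \dots, q_m$ is well-defined and legitimate. I expect this bookkeeping about partiality of $\delta_\tk$ — together with being explicit that the indexing aligns ($q_k$ is both the state reached after $k$ steps in $\A_\tk$ and the $Q$-component of the $k$-th input symbol) — to be the only mildly delicate point; the rest is a direct unwinding of the definitions and a one-line induction.
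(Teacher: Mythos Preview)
Your proposal is correct and follows essentially the same approach as the paper's proof: both directions are obtained by directly unwinding the defining equivalence $\delta_\tk(q,(a,q'))=q'$ iff $(q,a,q')\in\delta$, so that an accepting run of $\A_\varphi$ on $a_1\dots a_m$ corresponds exactly to an accepting run of $\A_\tk$ on $(a_1,q_1)\dots(a_m,q_m)$. Your additional remarks on the partiality of $\delta_\tk$ and on the alignment of indices are sound clarifications but do not change the argument.
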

\begin{proof}
    By definition, $a_1\dots a_m\in\L(\A_\varphi)$ iff there exist a run $r=q_1\dots q_m$ s.t. for $1\le k \le m$, $\delta(q_{k-1},a_k) = q_k$ and $q_m \in F$.
    Consider the word $w' = (a_1,q_1)\dots(a_m,q_m)$.
    By construction of $\A_\tk$, $w'$ induces a run $r^d = r$.
    Since $q_m\in F$ by assumption, $r^d$ is an accepting run for $\A_\tk$, and therefore $w' \in \L(\A_\tk)$ is accepted.
    The other direction follows by construction because, if $(a_1,q_1)\dots(a_m,q_m)\in\L(\A_\tk)$, then by construction $q_1\dots q_m$ is a run of $\A_\varphi$ over word $a_1\dots a_m$, and since $q_m\in F$ by assumption $a_1\dots a_m\in \L(\A_\varphi)$.
\end{proof}

\noindent\textbf{Step 3.}
Then, given $\A_\tk$ and $\C$, we build the \emph{composition \DFA} $\A_{\varphi,\C} = (A',Q',q'_0,\allowbreak F', \delta')$ as follows:
%
$A' = \{(a,q,i,\sigma_j) \mid (a,q,i,\sigma_j)\in A \times Q \times \{1,\dots,n\} \times \big(\bigcup_i \Sigma_i\big)$\text{ and }$ \sigma_j\in\Sigma_i$\};
$Q' = Q\times \Sigma_1\times\cdots\Sigma_n$;
$q'_0 = (q_0, \sigma_{10}\dots\sigma_{n0})$;
$F' = F\times F_1\times \cdots \times F_n$;
$\delta'((q,\sigma_1\dots\sigma_i\dots\sigma_n), (a, q', i,\allowbreak\sigma'_i)) = (q', \sigma_1\dots\sigma'_i\dots\sigma_n)$ iff $\delta_i(\sigma_i,a) = \sigma'_i$, and $\delta_\tk(q,(a,q')) = q'$.
Intuitively, the \DFA $\A_{\varphi,\C}$ is a synchronous cartesian product between the \NFA $\A_\varphi$ and the service $\S_i$ chosen by the current symbol $(a,q,i,\sigma)\in A'$.
The ``angelic'' nondeterminism of $\A_\varphi$ and the ``devilish'' nondeterminism coming from the services is cancelled by moving the choice of the next \NFA state and the next system service state in the alphabet $A'$.
%
%
%

It can be shown that there is a relationship between the accepting runs of the \DFA $\A_{\varphi,\C}$ and the set of successful executions of some orchestrator $\orch$ over community $\C$ for the specification $\varphi$.
First, let us define some preliminary notions.
%
Given a word $(a_1,q_1,o_1,\sigma_{o_1})\dots(a_m,q_m,o_m,\sigma_{o_m})\in A'^*$, which induces the run $r = (q_0,\sigma_{10}\dots\sigma_{n0}),\dots,(q_m,\sigma_{1m}\dots\sigma_{nm})$ over $\A_{\varphi,\C}$, we define the history $h = \tau_{\varphi,\C}(w) = (\sigma_{10}\dots\sigma_{n0}),(a_1,o_1),\dots,(\sigma_{1m}\dots\sigma_{nm})$.

\begin{proposition}
\label{thm:w-iff-successful-h}
    Let $h$ be a history over $\C$ and $\varphi$ be a specification.
    Then, $h$ is successful iff there exist a word $w\in A'^*$ such that $h=\tau_{\varphi,\C}(w)$ and $w\in\L(\A_{\varphi,\C})$.
\end{proposition}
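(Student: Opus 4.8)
The plan is to prove both implications by unwinding the definition of $\tau_{\varphi,\C}$, the accepting set $F' = F\times F_1\times\cdots\times F_n$ of $\A_{\varphi,\C}$, and the two clauses of $\successful(\cdot)$. The key observation is that membership of the final state of a run of $\A_{\varphi,\C}$ in $F'$ conjoins ``the $\A_\varphi$-run ends in an accepting state'' with ``every service ends in a final state'', which are exactly clauses (1) and (2) of successfulness, once we know that the $Q$-projection of a run of $\A_{\varphi,\C}$ is a run of $\A_\varphi$ over the corresponding action sequence. The latter is immediate from $\delta_\tk(q,(a,q'))=q'$ iff $(q,a,q')\in\delta$ (and is essentially the content of the previous Proposition).

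($\Rightarrow$) Suppose $h = (\sigma_{10}\dots\sigma_{n0}),(a_1,o_1),\dots,(\sigma_{1m}\dots\sigma_{nm})$ is successful. Then $\actions(h)=a_1\dots a_m\models\varphi$, so there is an accepting run $q_0,a_1,q_1,\dots,a_m,q_m$ of $\A_\varphi$, i.e. $(q_{k-1},a_k,q_k)\in\delta$ for $1\le k\le m$ and $q_m\in F$. I would then take $w = (a_1,q_1,o_1,\sigma_{o_1})\dots(a_m,q_m,o_m,\sigma_{o_m})$, where the last component of the $k$-th letter is the $o_k$-th service state of $h$ after step $k$; each letter lies in $A'$ since that state is in $\Sigma_{o_k}$. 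I then verify, step by step, that $w$ induces the run $r=(q_0,\sigma_{10}\dots\sigma_{n0}),\dots,(q_m,\sigma_{1m}\dots\sigma_{nm})$ over $\A_{\varphi,\C}$: the $k$-th $\delta'$-transition requires $\delta_{o_k}(\sigma_{o_k,k-1},a_k)=\sigma_{o_k,k}$, which holds because $h$ is a history (clause (iv) of the definition of traces), and $\delta_\tk(q_{k-1},(a_k,q_k))=q_k$, which holds by the choice of the $\A_\varphi$-run; the untouched service components are copied over, matching $\delta'$, so (by induction on $k$) the configurations of $r$ coincide with those of $h$. The last state of $r$ is in $F'$ because $q_m\in F$ and, by successfulness clause (2), every service component is final. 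Hence $w\in\L(\A_{\varphi,\C})$, and $\tau_{\varphi,\C}(w)=h$ by construction of $\tau_{\varphi,\C}$.

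($\Leftarrow$) Conversely, let $w=(a_1,q_1,o_1,\sigma_{o_1})\dots(a_m,q_m,o_m,\sigma_{o_m})\in\L(\A_{\varphi,\C})$ with $\tau_{\varphi,\C}(w)=h$, and let $r=(q_0,\sigma_{10}\dots\sigma_{n0}),\dots,(q_m,\sigma_{1m}\dots\sigma_{nm})$ be the run it induces, so its last state lies in $F'$. Reading off the $Q$-components of the $\delta'$-transitions gives $(q_{k-1},a_k,q_k)\in\delta$ and $q_m\in F$, so $a_1\dots a_m\in\L(\A_\varphi)$; since $\actions(h)=a_1\dots a_m$ by definition of $\tau_{\varphi,\C}$, clause (1) holds. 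Reading off the service components, $\lastt(\states(h))=(\sigma_{1m}\dots\sigma_{nm})$ has every component in the corresponding $F_i$ since the last state of $r$ is in $F'$, so clause (2) holds. Thus $h$ is successful.

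The argument is essentially bookkeeping, so the points I would be most careful about are: (i) in the forward direction, choosing the last components of the letters of $w$ to be exactly the service states occurring in $h$, so that $w$ induces the run whose $\tau_{\varphi,\C}$-image is $h$ and not some other run; and (ii) keeping straight that the ``devilish'' service nondeterminism is already resolved inside the given history $h$ — the orchestrator does not choose it here — so that the side condition $\delta_{o_k}(\sigma_{o_k,k-1},a_k)=\sigma_{o_k,k}$ on the $\delta'$-transition is witnessed precisely by clause (iv) of $h$ being a history, with nothing new to be chosen. Correctly tracking these indices, together with the fact that $F'$ is the product of $F$ with the $F_i$'s, is exactly what makes the two successfulness clauses line up with membership in $\L(\A_{\varphi,\C})$.
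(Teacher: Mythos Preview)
Your proof is correct and follows essentially the same approach as the paper's: both directions unfold the definitions of $\tau_{\varphi,\C}$, $\delta'$, and $F'=F\times F_1\times\cdots\times F_n$, using in the forward direction an accepting run of $\A_\varphi$ on $\actions(h)$ to supply the $q_k$-components of $w$, and in the backward direction projecting the accepting run of $\A_{\varphi,\C}$ onto its $Q$- and service-components. If anything, your write-up is more explicit about the bookkeeping (the untouched service coordinates, and why clause (iv) of the definition of history witnesses the $\delta_{o_k}$ side condition), but the argument is the same.
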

\begin{proof}
    We prove both directions of the equivalence separately.
    
    \noindent
    $(\Leftarrow)$ 
    Let $w$ be a word $w=(a_1,q_1,o_1,\sigma_{o_1})\dots(a_m,q_m,o_m,\sigma_{o_m})\in A'^*$
    Assume $w\in\L(\A_{\varphi,\C})$.
    We have that the induced run $r = (q_0,\sigma_{10}\allowbreak\dots\allowbreak\sigma_{n0}),\dots,(q_m,\sigma_{1m}\allowbreak\dots\allowbreak\sigma_{nm})$ over $\A_{\varphi,\C}$ is accepting.
    This means that $q_m\in F$ and $\sigma_{im}\in F_i$ for all $i$.
    Now consider the history $h=\tau_{\varphi,\C}(w) = (\sigma_{10}\dots\sigma_{n0}),(a_1,o_1),\dots(\sigma_{1m}\dots\sigma_{nm})$.
    For every $0\le k < m$, we have by definition of $\delta'$ that $\delta_i(\sigma_{i,k},a_{k+1}) = \sigma_{i,k+1}$ if $i=o_k$, and $\sigma_{i,k} = \sigma_{i,k+1}$ otherwise. Moreover, $o_k\in\{1\dots n\}$, $a_k\in A$ and $\sigma_{ik}\in \Sigma_i$, for all $i\in\{1,\dots,n\}$ and $0\le k < m$. Therefore, $h$ is a valid history.
    Moreover, $h$ is a successful history because $q_m\in F$ iff $a_1,\dots, a_m\models \varphi$, and $\sigma_{im}\in F_i$ for all $i$.

    \noindent
    $(\Rightarrow)$
    Assume that $h=(\sigma_{10}\dots\sigma_{n0}),(a_1,o_1),\dots,(\sigma_{1m}\dots\sigma_{nm})$ is a successful history over $\C$
    Then we both have that \myi $\actions(h) = a_1,\dots,a_m \models \varphi$ and \myii $\sigma_{im}\in F_i$ for all $i$.
    By construction of the \NFA $\A_\varphi$, proposition \myi implies that there exists a run $q_0,\dots,q_m$ where $q_m\in F$.
    Moreover, let $\sigma_1,\dots,\sigma_m$ be the sequence of service states s.t. $\sigma_k = \sigma_{i,k}$ (where $i = o_k$), for $1\le k \le m$.
    Now consider the sequence $w = 
    (a_1,q_1,o_1,\sigma_1),
    \dots,
    (a_m,q_m,o_m,\sigma_m)$.
    %
    By construction, we have $\tau_{\varphi,\C}(w) = h$.
    Moreover, from proposition \myi and proposition \myii, it follows that $w\in\L(\A_{\varphi,\C})$.
    This concludes the proof.
\end{proof}

\noindent
\Cref{thm:w-iff-successful-h} shows that there is a tight relationship between accepting runs of $\A_{\varphi,\C}$ and successful histories over $\C$ for specification $\varphi$.
Now, we consider the \DFA $\A_{\varphi,\C}$ as a \DFA game. First, we will present the definition of \DFA game and how to compute a solution.

\noindent\textbf{Step 4.}
\emph{\DFA games} are games between two players, here called respectively the \emph{environment} and the \emph{controller}, that are specified by a \DFA. 
We have a set of $\X$ of \emph{uncontrollable symbols}, which are under the control of the environment, and a set $\Y$ of \emph{controllable symbols}, which are under the control of the controller. 
A \emph{round} of the game consists of both the controller and the environment choosing the symbols they control. A (complete) \emph{play }is a word in $\X\times \Y$ describing how the controller and environment set their propositions at each round till the game stops.
The \emph{specification} of the game is given by a \DFA $\G$ whose alphabet is $\X\times \Y$. 
%
%
A play is \emph{winning} for the controller if such a play leads from the initial to a final state. 
A \emph{strategy} for the controller is a function $f: \X^* \to \Y$ that, given a history of choices from the environment, decides which symbols $\Y$ to pick next. 
In this context, a history has the form $w=(X_1,Y_1)\cdots (X_m,Y_m)$ .
Let us denote by $w_\X|_k$ the sequence projected only on $\X$ and truncated at the $k$-th element (included), i.e., $w_\X|_k = X_1\cdots X_k$.
A \emph{winning strategy} is a strategy $f: \X^* \to \Y$ such that for all sequences $w=(X_1,Y_1)\cdots (X_n,Y_n)$ with $Y_i = f(w_\X \mid_k)$, we have that $w$ leads to a final state of our \DFA game. 
The \emph{realizability} problem consists of checking whether there exists a \emph{winning strategy}. The \emph{synthesis} problem amounts to actually computing such a strategy.

We now give a sound and complete technique to solve realizability for \DFA games. We start by defining the \emph{controllable preimage} $PreC(\E)$ of a set $\E$ of states $\E$ of $\G$ as the set
of states $s$ such that there exists a choice for symbols $\Y$ such that for all choices of symbols $\X$, game $\G$ progresses to states in $\E$. Formally, $\preimg(\E) = \{s \in S \mid \exists Y \in \Y .\forall X \in \X .\delta(s, (X, Y)) \in \E\}$.
Using such a notion, we define the set $\win(\G)$ of winning
states of a \DFA game $\G$, i.e., the set formed by the states from which the controller can win the DFA game $\G$. Specifically, we define $\win(G)$ as a least-fixpoint, making use of approximates $\win_k(\G)$ denoting all states where the controller wins in at most $k$ steps: (1) $\win_0(\G) = F$ (the final states of $\G$); and (2) $\win_{k+1}(\G) = \win_{k}(\G) \cup \preimg(\win_k(\G))$.
Then, $\win(\G) = \bigcup_k \win_k(\G)$. 
Notice that computing $\win(\G)$ requires linear time in the number of states in $\G$. Indeed, after at most a linear number of steps $\win_{k+1}(\G) = \win_k(\G) = \win(\G)$.
It can be shown that a DFA game $\G$ admits a winning strategy iff $s_0 \in \win(\G)$ \cite{DBLP:conf/ijcai/GiacomoV15}.

%
The resulting strategy is a transducer $T = (\X\times\Y, Q',q'_0, \delta_T,\theta_T)$, defined as follows: $\X\times \Y$ is the input alphabet, $Q'$ is the set of states, $q'_0$ is the initial state, $\delta_T: Q'\times \X\to Q'$ is the transition function such that $\delta_T(q,X) = \delta'(q,(X, \theta(q))$, and $\theta_T: Q\to \Y$ is the output function defined as $\theta_T(q) = Y$ such that if $q\in \win_{i+1}(\G)\setminus\win_{i}(\G)$ then $\forall X.\delta(q,(X,Y))\in \win_i(\G)$ \cite{DBLP:conf/ijcai/GiacomoV15}.


\noindent\textbf{Step 5.}
Given a strategy in the form of a transducer $T$, we can obtain an orchestrator that realizes the specification as follows.
Let the \emph{extended transition function} $\delta_T^*$ of $T$ is $\delta_T^*(q, \epsilon) = q$ and $\delta_T^*(q, wa) = \delta_T(\delta_T^*(q, w), a)$. 
%
Then, for every sequence $w$ of length $m\ge 0$ $w=(X_1,Y_1)\dots(X_m,Y_m)$, where for each index $k$, $Y_k$ and $X_k$ are of the form $(a_k,q_k,o_k)$ and $\sigma_{o_k,k}$ respectively,
we define the orchestrator $\orch_T((\sigma_{10}\dots\sigma_{n0}),(\sigma_{11}\dots\sigma_{o_1,1}\dots\sigma_{n1}),\dots(\sigma_{1m}\dots\sigma_{o_k,m}\dots\sigma_{nm})) = (a_{m+1}, o_{m+1})$, where $(a_{m+1}, q_{m+1}, o_{m+1}) = \theta_T(\delta_T^*(q_0,w))$.



\medskip\noindent
In the following, we show the correctness of our approach, i.e. that we can reduce the problem of service composition for \LTLf task specifications to solving the \DFA game over $\A_{\varphi,\C}$ with uncontrollable symbols $\X = \bigcup_i\Sigma_i$ and controllable symbols $\Y = A\times Q\times \{1,\dots,n\}$.
%

\begin{theorem}
The \emph{service composition realizability} problem with community $\C$ for the satisfaction of an \LTLf task specification $\varphi$ (Problem 1) can be solved by checking whether $q'_0 \in \win(\A_{\varphi,\C})$.
\end{theorem}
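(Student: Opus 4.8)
The plan is to establish a two-way correspondence between winning strategies for the \DFA game over $\A_{\varphi,\C}$ and orchestrators that realize $\varphi$ with $\C$, and then invoke the known characterization (cited from \cite{DBLP:conf/ijcai/GiacomoV15}) that the game admits a winning strategy iff $q'_0 \in \win(\A_{\varphi,\C})$. The technical bridge is \Cref{thm:w-iff-successful-h}, which already ties accepting words of $\A_{\varphi,\C}$ to successful histories over $\C$; what remains is to lift this from a single word/history to the branching (strategy-vs-adversary) setting, matching the controllable symbols $\Y = A \times Q \times \{1,\dots,n\}$ with the orchestrator's choices (action, chosen \NFA-successor, delegated service) and the uncontrollable symbols $\X = \bigcup_i \Sigma_i$ with the devilish nondeterministic response of the selected service.

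First I would prove the ``if'' direction: assuming $q'_0 \in \win(\A_{\varphi,\C})$, take the transducer $T$ produced by the synthesis procedure of Step 4 and the induced orchestrator $\orch_T$ of Step 5. I would show that every execution $t \in \T_{\orch_T,\C}$ is successful. The key observation is that from a state in $\win_{i+1}(\G)\setminus\win_i(\G)$ the output $\theta_T$ picks a $\Y$-move guaranteeing that \emph{all} $\X$-responses land in $\win_i(\G)$; hence along any play consistent with $T$ the winning-rank strictly decreases and $F'$ is reached within at most $|\win(\G)|$ rounds. Since $T$'s transitions are exactly $\delta'$ restricted by $\theta_T$, the resulting play is a word $w \in \L(\A_{\varphi,\C})$ whose $\tau_{\varphi,\C}$-image is the prefix $h$ of $t$ at that point; by \Cref{thm:w-iff-successful-h}, $h$ is successful, so $t$ is successful. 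One must check the bookkeeping that $\orch_T$, whose domain is sequences of service-state configurations, is well-defined: the $\X$-component $\sigma_{o_k,k}$ recorded at round $k$ together with the unchanged components of the other services is exactly the next configuration, so the history seen by $\orch_T$ determines the transducer state reached, and the angelic \NFA-successor $q_k$ is reconstructed deterministically inside $T$.

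For the ``only if'' direction I would argue the contrapositive via the soundness-and-completeness of the fixpoint: if $q'_0 \notin \win(\A_{\varphi,\C})$, then for \emph{every} choice of $\Y$ the environment has an $\X$-response staying outside $\win(\A_{\varphi,\C})$ forever, so no play consistent with any controller strategy ever reaches $F'$. Given any candidate orchestrator $\orch$, its choices at each configuration define a controller strategy $f$ in the game (reading off $(a,o)$ from $\orch$ and the angelic $q$-successor from the \NFA run tracked in $Q'$); by the previous sentence there is an environment play against $f$ that never hits $F'$. Translating that play back through $\tau_{\varphi,\C}$ yields an execution $t \in \T_{\orch,\C}$ none of whose prefixes is accepted by $\A_{\varphi,\C}$; by \Cref{thm:w-iff-successful-h} again, no prefix of $t$ is successful, so $t$ is not successful and $\orch$ does not realize $\varphi$. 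Hence realizability of the composition implies $q'_0 \in \win(\A_{\varphi,\C})$.

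The main obstacle I anticipate is not any single deep step but the careful alignment of the three different ``state spaces'' in play — orchestrator histories (sequences of service configurations), \DFA-game plays (words over $\X \times \Y$), and traces/histories of $\C$ — and in particular showing that an orchestrator, which only sees service configurations and not the internal \NFA state $q$ or the angelic successor choice, corresponds to a bona fide game strategy and vice versa. The resolution is that the \NFA run is functionally determined along any play of $\A_{\varphi,\C}$ by the sequence of controllable symbols (which include the $q$-components), and the transducer $T$ carries this state internally; so the apparent information mismatch is only superficial. A secondary point to handle cleanly is the "stopping" semantics: the orchestrator may stop at any successful history, which matches the \DFA-game notion that a play is winning as soon as it reaches $F'$, and the final states $F' = F \times F_1 \times \cdots \times F_n$ encode exactly conditions (1) and (2) of $\successful$ via the two parts of \Cref{thm:w-iff-successful-h}.
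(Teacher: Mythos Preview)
Your proposal is correct and follows essentially the same approach as the paper: soundness is argued via the rank-decreasing property of the winning region (the paper phrases this as an explicit backward induction on $\win_k$ building orchestrators $\orch_k$, while you invoke the transducer of Steps 4--5 directly, but the content is identical), and completeness is by contradiction, constructing from $q'_0\notin\win(\A_{\varphi,\C})$ an arbitrarily long play that never visits $F'$ and hence, via \Cref{thm:w-iff-successful-h}, a non-successful execution of any candidate $\orch$. The alignment concerns you flag between orchestrator histories and game plays are exactly the bookkeeping the paper handles, and your resolution matches theirs.
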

\begin{proof}
    The proof is rather technical, therefore we first give an intuitive explanation. 
    Soundness can be proved by induction on the maximum number of steps $i$ for which the controller wins the \DFA game from $q'_0$, building $\orch$ in a backward fashion such that it chooses $(a_k,o_k)\in A'$ that allows forcing the win in the \DFA game (which exists by assumption).
    Completeness can be shown by contradiction, assuming that there exists an orchestrator $\orch$ that realizes $\varphi$ with community $\C$, but that $q'_0\not\in \win(\A_{\varphi,\C})$; the latter implies that we can build an arbitrarily long history that is not successful, by definition of winning region, contradicting that $\orch$ realizes $\varphi$.

    We now provide the full proof of the claim by separately proving the soundness and completeness of our approach.

    \smallskip\noindent
    \textbf{Soundness.} Assume $q'_0 = (q_0,\sigma_{10}\dots\sigma_{n0})\in \win_m(\A_{\varphi,\C})$, i.e. the controller can win in at most $m$ steps; we aim to show that there exists an orchestrator $\orch$ that realizes $\varphi$, i.e.
    for all executions $t\in \T_{\orch,\C}$, at least one prefix $h$ of $t$ is a successful (finite) execution.

    We prove it by induction on the maximum number of steps $i$ for which the controller wins the \DFA game from $q'_0$, building $\orch$ in a backward fashion.

    \noindent
    \emph{Base case} ($k=0$): assume $q'_0\in F'$, i.e. $q'_0$ is already a goal state. Then, the problem is trivially realizable since the task specification is already satisfied (the empty execution is a successful history $h = \epsilon$ for any $\orch$, and in particular, $h$ is the prefix of every infinite execution $t\in\T_{\orch,\C}$).

    \noindent
    \emph{Inductive case}: assume the claim holds for every state $q'_k\in Q'$ that can reach an accepting state in at most $k$ steps, i.e. $q'_k=(q_k,\sigma_{1k}\dots\sigma_{nk})\in \win_k(\A_{\varphi,\C})$, and let $\orch_k$ be the orchestrator that realizes the task specification starting from such states.
    Let $\Delta \win_{k+1}(\A_{\varphi,\C}) = \win_{k+1}(\A_{\varphi,\C}) \setminus \win_k(\A_{\varphi,\C})$. Consider a state $q'_{k+1} = (q_{k+1},\sigma_{1,k+1}\dots\sigma_{n,k+1})\in \Delta \win_{k+1}(\A_{\varphi,\C})$. 
    By construction, $q'_{k+1}\in \preimg(\win_k(\A_{\varphi,\C}))$. This means that there exist a controllable symbol $Y\in \Y$ such that for all uncontrollable symbols $X\in\X$ we can reach a state in $\win_k(\A_{\varphi,\C})$, i.e. $\delta(q'_{k+1}, (X,Y)) = q'_k \in \win_k(\A_{\varphi,\C})$.
    Let $\orch_{k+1}$ be defined as $\orch_k$ for histories of length $l \le k$, plus $\orch_{k+1}((\sigma'_{1,k+1}\dots\sigma'_{n,k+1})) = (a_{k+1},o_{k+1})$ where $Y=(a_{k+1}, q'_{k+1}, o_{k+1})$, for every $\sigma'_{1,k+1}\dots\sigma'_{n,k+1}$ such that $(q'_{k+1},\sigma'_{1,k+1}\dots\sigma'_{n,k+1}) \in \Delta \win_{k+1}(\A_{\varphi,\C})$.
    By the inductive hypothesis, we have that all finite executions $h_{k}$ of $\orch_k$, starting from $\sigma_{1k}\dots\sigma_{nk}$, are successful finite executions. 
    To prove our claim, we only need to show that all $h_{k+1}$ is also a successful execution, and therefore a prefix of some infinite execution $t\in\T_{\orch,\C}$.
    If $|h_{k+1}| \le k$, then it holds by inductive hypothesis.
    For histories of length $k+1$, this is the case because, by construction, we have $h_{k+1} = \sigma'_{z,k+1},(a_{k+1},o_{k+1}),h' \in \H_{\orch_{k+1},\C}$, for some $h'\in\H_{\orch_{k},\C}$, some $(q'_{k+1}, \sigma'_{z,k+1})\in \Delta \win_{k+1}(\A_{\varphi,\C})$, and $(a_{k+1},o_{k+1}) = \orch_{k+1}(\sigma'_{z,k+1})$. In other words, $h_{k+1}$ is a valid finite execution of $\orch_{k+1}$, and moreover, it is successful since $h'$ is successful.
    Finally, we have that $\orch_{m}$, by induction, is an orchestrator that can force the win of the game from $q'_0$.

    \smallskip\noindent
    \textbf{Completeness.}
    By contradiction, assume there exists an orchestrator $\orch$ that realizes $\varphi$ with community $\C$, but that $q'_0\not\in \win(\A_{\varphi,\C})$.
    If $q'_0\not\in \win(\A_{\varphi,\C})$, then it means, by definition of $\win(\A_{\varphi,\C})$ and $\preimg$, that for all $Y\in\Y$, there exist $X\in\X$ such that the successor state $q'_1$ is not in $\win(\A_{\varphi,\C})$. Therefore, we can recursively generate an arbitrarily long word $w = (X_1,Y_1)\dots(X_m,Y_m)$, for any choice of $Y_1\dots Y_m$, such that $w\not\in \L(\A_{\varphi,\C})$.
    Now consider the history $h=(\sigma_{10}\dots\sigma_{n0}),(a_1,o_1),(\sigma_{11}\dots\allowbreak\sigma_{n1})\dots(a_m,o_m),(\sigma_{1m}\dots\sigma_{nm})$, built as follows:
    for all $1\le k\le m$, $(a_k,o_k) = \orch((\sigma_{10}\dots\sigma_{n0})\dots(\sigma_{1,k-1}\dots\sigma_{n,k-1}))$,
    and for any $Y_k=(a_k,q_k,o_k)$, take $X_k = \sigma$ such that $w = (Y_1,X_1)\dots(Y_k,X_k)\not\in \L(\A_{\varphi,\C})$ and $\delta_{o_k}(\sigma_{o_k,k}, a_k) = \sigma$,
    and set $\sigma_{o_k,k} = \sigma$.
    In other words, we consider any valid execution $h$ of $\orch$ where the successor state of the chosen service is taken according to the winning environment strategy (which is losing for the agent).
    By construction, $h$ is a finite execution of
    $\orch$. 
    Moreover, 
    $h$ it is not successful, because $w$ is not accepted by $\A_{\varphi,\C}$ and, by construction of $\A_{\varphi,\C}$, this means that either $\actions(h)\not\models \varphi$, or for some service $\S_i$, $\sigma_{ik}\not\in F_i$.
    Since we assumed that $\orch$ realizes $\varphi$ with community $\C$, we can construct an infinite execution $t\in\T_{\orch,\C}$ of $\orch$, such that $h$ is a prefix of $t$, that is not successful, we reached a contradiction.
\end{proof}

\smallskip

\noindent
\textbf{Computational cost.} Considering the cost of each of the steps above, we get the following upper bound for the worst-case computational cost.
\begin{theorem}
    Problem \ref{prob:nondet-comp} can be solved
    in at most exponential time in the size of the formula, 
    in at most exponential time in the number of services, 
    and in polynomial time in the size of the services.
\end{theorem}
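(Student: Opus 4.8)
The plan is to walk through the five steps of the construction, bound the cost of each, and then compose the bounds. The guiding observation is that the only two sources of exponential blow-up are Step~1 (the \LTLf-to-\NFA translation, responsible for the exponential in $|\varphi|$) and Step~3 (the synchronous product with the community, responsible for the exponential in $n$), while Steps~2, 4 and 5 are all polynomial in the size of the automaton they operate on. A point worth stressing in the write-up is that the construction deliberately never determinizes $\A_\varphi$: by pushing the \NFA's nondeterministic choice into the alphabet of the controllable \DFA $\A_\tk$ in Step~2, we pay only a single exponential in $|\varphi|$ rather than the naive double exponential.

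First I would analyze Step~1: by the cited correspondence between \LTLf and automata on finite words, the \NFA $\A_\varphi = (A, Q, q_0, F, \delta)$ has $|Q|$ at most exponential in $|\varphi|$ and is produced in time polynomial in $|Q|$; this is the only step whose cost depends on $\varphi$. For Step~2 I would observe that the controllable \DFA $\A_\tk$ has exactly the same state set $Q$ (only the alphabet grows, to $A\times Q$), so $\delta_\tk$ is computed in time polynomial in $|Q|\cdot|A|$ with no further blow-up. For Step~3, the composition \DFA $\A_{\varphi,\C} = (A', Q', q'_0, F', \delta')$ has state set $Q' = Q\times\Sigma_1\times\cdots\times\Sigma_n$, hence $|Q'| = |Q|\cdot\prod_{i=1}^{n}|\Sigma_i| \le |Q|\cdot(\max_i|\Sigma_i|)^n$; the transition function $\delta'$ and the alphabet $A'$ (a subset of $A\times Q\times\{1,\dots,n\}\times\bigcup_i\Sigma_i$) are computed in time polynomial in $|Q'|$, $|A|$ and $n$. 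Thus $|Q'|$, and the time to build $\A_{\varphi,\C}$, is at most exponential in $|\varphi|$ (through $|Q|$), at most exponential in $n$, and polynomial in the size of each service.

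Then I would dispatch the remaining steps, which are cheap relative to the above. Step~4 solves the \DFA game on $\A_{\varphi,\C}$: as already established in the preceding discussion, the fixpoint computation of $\win(\A_{\varphi,\C})$ stabilizes after a linear number of iterations and the check $q'_0\in\win(\A_{\varphi,\C})$ takes time linear in $|Q'|$. Step~5 reads the transducer $T$, and hence the orchestrator $\orch_T$, off the fixpoint approximants $\win_k(\A_{\varphi,\C})$, again in time polynomial in $|Q'|$. Composing, the overall running time is polynomial in $|Q|\cdot\prod_{i=1}^{n}|\Sigma_i|$; since $|Q|$ is at most exponential in $|\varphi|$ and $\prod_{i=1}^{n}|\Sigma_i|\le(\max_i|\Sigma_i|)^n$, we obtain the claimed bound, namely at most exponential time in the size of the formula, at most exponential time in the number of services, and polynomial time in the size of the services.

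The step I expect to require the most care is not any single computation but the bookkeeping of the three parameters and the disentangling of their contributions: in particular, verifying that the factor $\prod_{i=1}^{n}|\Sigma_i|$ is genuinely polynomial in the size of the individual services once $n$ is held fixed as a separate parameter (using $\prod_i|\Sigma_i|\le(\max_i|\Sigma_i|)^n$ and $\max_i|\Sigma_i|\le\sum_i|\S_i|$), and double-checking that the successive alphabet extensions — $A\times Q$ in Step~2, then $A\times Q\times\{1,\dots,n\}\times\bigcup_i\Sigma_i$ in Step~3 — stay within these bounds rather than smuggling in a hidden extra exponential.
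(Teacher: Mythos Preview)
Your proposal is correct and follows exactly the approach the paper intends: the paper does not give an explicit proof of this theorem, but simply prefaces it with ``Considering the cost of each of the steps above, we get the following upper bound'', and your write-up makes that step-by-step accounting explicit. Your emphasis on the fact that $\A_\varphi$ is never determinized (the \NFA's nondeterminism is absorbed into the controllable alphabet in Step~2) is the key structural point and is consistent with the paper's own remark, made later for the stochastic case, that this saves an exponential blow-up.
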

\noindent
We conjecture this cost is actually the  exact complexity characterization of the problem, but proper proof is left as future work.
%
%
Moreover, note that this is analogous to what happens in the classic service composition setting, but with the difference that one must realize all traces requested by the target service, whereas in our case only one trace is enough.

%
%
%
%
%
%
%
%
%
\section{Composition of Stochastic Services for \LTLf Tasks}

\label{sec:stochastic-composition}
In this section, we present our service composition framework in stochastic settings. 
Similarly to \Cref{sec:nondet-composition}, we aim to realize an \LTLf task specification with the available services.
However, this time we model nondeterminism using probability distributions over the services' successor states. Moreover, we allow the specification to be approximately satisfied by considering the objective of maximizing the satisfaction probability of the specification.

\subsection{Stochastic Services Framework}
In this setting, we still have a task specification $\varphi$ expressed in \LTLf over the set of propositions $A$, but we consider \emph{stochastic services}, i.e. services whose result of the operation has a probabilistic outcome.
A stochastic service is a tuple $\stocS = \langle \Sigma, A, \sigma_{0}, F, P, C\rangle$, where $\Sigma$, $A$, $\sigma_0$, and $F$ are defined as in the non-stochastic setting, $C: \Sigma \times A \to \mathbb{R}^+$ is the \emph{cost function} that assigns a (strictly positive) cost to each state-action pair, and $P: \Sigma \times A \to \Delta(\Sigma)$ is the \emph{transition function} that returns for every state $\sigma$ and action $a$ a distribution over next states.
%
%
A \emph{(stochastic) service community} is a collection of stochastic services $\stocC = \{ \stocS_1,\dots,\stocS_n \}$.
A \emph{(stochastic) trace of $\stocC$} is an infinite alternating sequence of the form $\stoc t = (\sigma_{10}\dots\sigma_{n0}),(a_1,o_1),(\sigma_{11},\dots,\sigma_{n1}),\dots$, where $\sigma_{i0}$ is the initial state of every service $\S_i$ and, for every $k \ge 1$, we have \myi $\sigma_{ik} \in \Sigma_i$ for all $i \in \{1,\dots,n\}$, \myii $o_k \in \{1,\dots, n\}$, \myiii $a_k \in A$, and \myiv for all $i$, $\sigma_{ik}\in \supp(P_i(\sigma_{i,k-1},a_{ik}))$ if $o_k = i$, and $\sigma_{ik} = \sigma_{i,k-1}$ otherwise.
The definitions of histories, sequence
of states, choices, orchestrators, executions, successful executions, and realizability remain the same.


We are interested in orchestrators that maximize the probability of satisfaction of the task specification, even when the specification cannot be surely satisfied (e.g. due to a stochastic misbehaviour of some service).
Moreover, while guaranteeing the optimal probability of satisfaction, we aim to find those orchestrators that minimize the expected utilization cost of the services, conditioned on the achievement of the task.

Before proceeding with a formalization of the optimization problem, we introduce additional auxiliary notions.
%
%
%
Analogously to what has been done for MDPs, for a finite execution $h$ of $\orch$ over $\stocC$, we use $\T_{\orch,\stocC}(h)$ to denote the set of all (infinite) executions $t\in\T_{\orch,\stocC}$ such that $h\in\histories(t)$.
Moreover, the $\sigma$-algebra associated with the stochastic behaviour of the orchestrator $\orch$ over the stochastic community $\stocC$ is the smallest $\sigma$-algebra that contains the trace sets $\T_{\orch,\stocC}(h)$, for all finite executions $h$, with the unique probability measure over it defined as:
\begin{equation}
\label{eq:probability-of-history}
\prob_{\orch,\stocC}(h) = \prod\limits_{k=1}^{|h|} P_{o_k}(\sigma_{o_k,k} \mid \sigma_{o_k,k-1}, a_{k})
\end{equation}
%
%
In particular, note that $\prob_{\orch,\stocC}(\T_{\orch,\stocC}(\langle(\sigma_{10}\dots\sigma_{n0})\rangle)) = 1$.
Let $\H^\varphi_{\orch,\stocC}$ be the set of finite executions $h$ of $\orch$ on $\stocC$ that start from $\sigma_{10}\dots\sigma_{n0}$ such that \myi they are successful, and \myii there is no prefix history $h$ that is successful. 
%
%
Intuitively, such a set only contains the executions that are successful for the first time.
%
%
%
The \emph{satisfaction probability} of $\varphi$ under orchestrator $\orch$ and community $\stocC$ is given by:
\begin{equation}
\label{eq:prob-sat-goal}
    \P^\stocC_\varphi(\orch) = \prob_{\orch,\stocC}\bigg(\bigcup_{h\in\H^\varphi_{\orch,\stocC}} \T_{\orch,\stocC}(h) \bigg)
\end{equation}
It is crucial to observe that since by definition there is no pair $h',h''\in \H^\varphi_{\orch,\stocC}(h)$ such that $h'\in\prefixes(h'')$, all trace sets $\T_{\orch,\stocC}(h)$ for $h\in \H^\varphi_{\orch,\stocC}$ are pairwise disjoint sets, which means that $\P^\stocC_\orch$ is a well-defined probability.

Moreover, we define the \emph{(conditioned) expected utilization cost} of services as the expected cost an orchestrator incurs in its successful executions, i.e.:
\begin{equation}
\label{eq:min-exp-cost}
    \J^\stocC_\varphi(\orch) = \mathbb{E}_{h\sim \prob_{\orch,\stocC}}\Bigg[ \sum\limits_{k=1}^{|h|} C_{o_k}(\sigma_{o_k,k-1},a_{k}) \bigg\vert\  \successful(h) \Bigg]
\end{equation}
Let $\Gamma(\stocC)$ be the set of orchestrators for the community $\stocC$.
Let $f: \Gamma(\stocC) \to \mathbb{R}$ be an objective function. We say an orchestrator $\orch\in\Gamma(\stocC)$ is $f$-optimal if $f(\sigma) = \sup_{\tau\in \Gamma(\stocC)} f(\tau)$, and write $\Gamma_f$ for the set of $f$-optimal orchestrators.

Finally, we define our optimization problem. We want to compute an orchestrator $\orch$ such that the following holds: 
\begin{equation}
\label{eq:lexicographic-objective}
    \orch\in\Gamma_{\P^\stocC_\varphi} \text{ and } \J^\stocC_\varphi(\orch) = \inf\limits_{\tau\in\Gamma_{\P^\stocC_\varphi}} \J^\stocC_\varphi(\tau)
\end{equation}
Intuitively, we fix a lexicographic order on the objective functions $\P^\stocC_\varphi$ and $\J^\stocC_\varphi$, meaning that we aim to minimize the expected utilization cost to satisfy the specification, conditioned to the satisfaction of the specification,
while guaranteeing the optimal probability of satisfying it.
Interestingly, in case the specification is \emph{exactly} realizable (in the sense of \Cref{sec:nondet-composition}), the notion of optimal orchestrator according to \Cref{eq:lexicographic-objective} coincides with the notion of realizability, as shown in the following results.

\begin{lemma}
    \label{lemma:union-sets-is-full-cylinder}
    If $\orch$ realizes the specification $\varphi$ over $\stocC$, then $\bigcup_{h\in\H^\varphi_{\orch,\stocC}} \T_{\orch,\stocC}(h) = \T_{\orch,\stocC}(\langle(\sigma_{10}\dots\sigma_{n0})\rangle)$.
\end{lemma}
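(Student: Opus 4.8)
The plan is to show the two inclusions. The inclusion $\bigcup_{h\in\H^\varphi_{\orch,\stocC}} \T_{\orch,\stocC}(h) \subseteq \T_{\orch,\stocC}(\langle(\sigma_{10}\dots\sigma_{n0})\rangle)$ is immediate: every $h \in \H^\varphi_{\orch,\stocC}$ is by definition a finite execution of $\orch$ over $\stocC$ starting from the initial configuration $(\sigma_{10}\dots\sigma_{n0})$, so any infinite execution $t$ with $h\in\histories(t)$ is in particular an execution of $\orch$ extending the trivial history $\langle(\sigma_{10}\dots\sigma_{n0})\rangle$. Hence every trace in the left-hand union lies in the right-hand cylinder.

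For the reverse inclusion, take an arbitrary $t\in\T_{\orch,\stocC}(\langle(\sigma_{10}\dots\sigma_{n0})\rangle)$, i.e. any infinite execution of $\orch$ over $\stocC$. Since $\orch$ realizes $\varphi$ over $\stocC$, by the definition of realizability $t$ is successful, meaning there exists a finite prefix history $h'\in\histories(t)$ with $\successful(h')$. Among all such successful prefixes of $t$, pick the shortest one and call it $h$; then $h$ is successful, and no proper prefix of $h$ (which is also a prefix of $t$) is successful, so $h\in\H^\varphi_{\orch,\stocC}$. Since $h\in\histories(t)$, we get $t\in\T_{\orch,\stocC}(h)$, hence $t$ belongs to the left-hand union. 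Combining the two inclusions gives the claimed equality.

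The only subtlety to get right is that $\H^\varphi_{\orch,\stocC}$ was defined as the set of finite executions that are successful \emph{for the first time} (no proper prefix successful) \emph{and} start from the initial configuration; one must check both conditions for the $h$ produced in the second inclusion. The "first time" condition holds by the minimality of the chosen prefix, and the "starts from $(\sigma_{10}\dots\sigma_{n0})$" condition holds because $h$ is a prefix of an execution $t$ which itself starts from that configuration. I do not expect any genuine obstacle here — the argument is essentially unwinding the definitions of \emph{realizes}, \emph{successful}, \emph{execution}, and $\H^\varphi_{\orch,\stocC}$ — the only care needed is to invoke the well-ordering of $\mathbb{N}$ to extract the shortest successful prefix so that membership in $\H^\varphi_{\orch,\stocC}$ is guaranteed rather than just membership in some $\T_{\orch,\stocC}(h')$ for a non-minimal successful $h'$.
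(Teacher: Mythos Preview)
Your proposal is correct and follows essentially the same approach as the paper: both prove the two inclusions separately, with the forward inclusion immediate and the reverse inclusion obtained by invoking realizability to get a successful prefix and then taking the shortest such prefix to land in $\H^\varphi_{\orch,\stocC}$. Your write-up is in fact slightly more careful than the paper's in explicitly verifying both membership conditions for $\H^\varphi_{\orch,\stocC}$.
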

\begin{proof}
    We prove \myi $\bigcup_{h\in\H^\varphi_{\orch,\stocC}} \T_{\orch,\stocC}(h) \subseteq \T_{\orch,\stocC}(\langle(\sigma_{10}\dots\sigma_{n0})\rangle)$
    and
    \myii $\bigcup_{h\in\H^\varphi_{\orch,\stocC}} \allowbreak\T_{\orch,\stocC}(h) \supseteq \T_{\orch,\stocC}(\langle(\sigma_{10}\dots\sigma_{n0})\rangle)$
    separately.
    Proposition \myi is immediate: every execution belongs to the set of executions starting from $\sigma_{10}\dots\sigma_{n0}$.
    To prove proposition \myii, we start by observing that for all $t\in\T_{\orch,\stocC}(\langle(\sigma_{10}\dots\sigma_{n0})\rangle)$, by definition of realizing orchestrator, they have a prefix $h'\in \prefixes(h)$ that is successful. 
    In particular, if $h''$ is the shortest prefix of $h$ that is successful, then $h''\in\H^\varphi_{\orch,\stocC}$ and $t\in\T_{\orch,\stocC}(h'')$.
    This implies that $t\in\bigcup_{h''\in\H^\varphi_{\orch,\stocC}} \T_{\orch,\stocC}(h'')$.
\end{proof}

\begin{theorem}
    \label{thm:optimality-iff-realizability}
    Let $\stocC$ be a community of stochastic services, and $\varphi$ be a task specification.
    %
    %
    The orchestrator $\orch$ realizes $\varphi$ with community $\stocC$ iff $\P^\stocC_\varphi(\orch) = 1$.
\end{theorem}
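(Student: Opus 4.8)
The plan is to prove the two implications separately. The direction ``$\orch$ realizes $\varphi$ implies $\P^\stocC_\varphi(\orch)=1$'' is almost immediate from \Cref{lemma:union-sets-is-full-cylinder}: if $\orch$ realizes $\varphi$, that lemma gives $\bigcup_{h\in\H^\varphi_{\orch,\stocC}}\T_{\orch,\stocC}(h)=\T_{\orch,\stocC}(\langle(\sigma_{10}\dots\sigma_{n0})\rangle)$, and applying $\prob_{\orch,\stocC}$ to both sides turns the left-hand side into $\P^\stocC_\varphi(\orch)$ by definition \eqref{eq:prob-sat-goal} and the right-hand side into $1$ by the normalization remarked right after \eqref{eq:probability-of-history}; no measure-theoretic subtlety is involved here.

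For the converse I would argue by contraposition: assume $\orch$ does not realize $\varphi$, so some execution $t\in\T_{\orch,\stocC}$ has no successful finite prefix, and aim at $\P^\stocC_\varphi(\orch)<1$. The key intermediate claim I would try to establish is that there is a finite execution $g$ of $\orch$ with $\prob_{\orch,\stocC}(g)>0$ admitting no successful extension. Given such a $g$, every infinite execution through $g$ has all of its finite prefixes non-successful --- those of length $\ge|g|$ because they extend $g$, the shorter ones because a successful prefix of $g$ would already make $g$ itself successful --- so $\T_{\orch,\stocC}(g)$ is disjoint from $\T_{\orch,\stocC}(h)$ for all $h\in\H^\varphi_{\orch,\stocC}$, and countable additivity gives $\P^\stocC_\varphi(\orch)\le 1-\prob_{\orch,\stocC}(g)<1$. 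To produce $g$ I would trace $t$ through the finite arena $\A_{\varphi,\C}$ of \Cref{sec:nondet-composition}: by \Cref{thm:w-iff-successful-h} a prefix of $t$ is successful exactly when the corresponding configuration of $\A_{\varphi,\C}$ is accepting, so the hypothesis on $t$ says that $t$ stays forever inside the finite set of configurations from which $\orch$'s own continuation never reaches an accepting one, and $g$ would be a prefix of $t$ reaching that set.

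The step I expect to be the real obstacle is exactly this construction of $g$, i.e.\ passing from ``$\orch$ fails on \emph{some} execution'' to ``$\orch$ fails on a set of executions of \emph{positive} probability''. A priori an orchestrator could fail only along a null set --- for instance by ``retrying indefinitely'' on an execution each of whose finite prefixes still admits a successful continuation, in which case the probabilities in \eqref{eq:probability-of-history} of those prefixes tend to $0$ and no doomed positive-probability $g$ exists. Excluding such behaviour is the delicate point; I expect it to rely on the finiteness of $\A_{\varphi,\C}$ (so that ``configurations from which acceptance can no longer be forced'' form a genuine fixed set) together with structural properties of the orchestrators considered here, letting one convert an infinite failing execution into a finite, irrecoverably failing one of positive probability. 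Everything else --- the $\sigma$-algebra bookkeeping with the sets $\T_{\orch,\stocC}(\cdot)$, their pairwise disjointness, and $\sigma$-additivity --- is routine once \Cref{lemma:union-sets-is-full-cylinder} and the observations surrounding \eqref{eq:prob-sat-goal} and \eqref{eq:probability-of-history} are in hand.
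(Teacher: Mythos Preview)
Your forward direction matches the paper's: both invoke \Cref{lemma:union-sets-is-full-cylinder} and then apply $\prob_{\orch,\stocC}$ to each side, using the normalization $\prob_{\orch,\stocC}(\T_{\orch,\stocC}(\langle(\sigma_{10}\dots\sigma_{n0})\rangle))=1$.

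For the converse, the paper does not argue by contraposition. It asserts directly that $\P^\stocC_\varphi(\orch)=1$ forces every execution $t\in\supp(\prob_{\orch,\stocC})$ to lie in some $\T_{\orch,\stocC}(h)$ with $h\in\H^\varphi_{\orch,\stocC}$ --- in effect identifying ``probability $1$'' with ``all executions'' without further justification. Your contrapositive route is more honest about what must actually be shown, and the obstacle you single out (an orchestrator failing only on a null set) is precisely the step the paper's one-line argument skips over.

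That obstacle is real and, under the paper's definitions, cannot be removed: the $(\Leftarrow)$ direction is false as stated. Take a single service with states $\{0,1\}$, initial state $0$, final set $\{1\}$, one action $a$ with $P_1(0\mid 0,a)=P_1(1\mid 0,a)=\tfrac12$ and $P_1(1\mid 1,a)=1$, and any $\varphi$ satisfied by every non-empty $a$-sequence (e.g.\ $\varphi=a$). The orchestrator that always plays $(a,1)$ has $\P^\stocC_\varphi(\orch)=1$, since state $1$ is reached almost surely; yet the execution $0,(a,1),0,(a,1),0,\ldots$ belongs to $\T_{\orch,\stocC}$ (every transition is in the support of $P_1$) and has no successful prefix, so $\orch$ does not realize $\varphi$. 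This is exactly your ``retrying indefinitely'' scenario: every finite prefix of that execution still admits a successful continuation, so no irrecoverably doomed $g$ of positive measure exists, and no finiteness argument on $\A_{\varphi,\C}$ changes that. Your proposal and the paper's proof both break at this point; the difference is that you anticipated it.
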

\begin{proof}
    \noindent
    ($\Rightarrow$)
    If an orchestrator $\orch$ realizes $\varphi$, then all infinite executions $t\in\T_{\orch,\stocC}$ have a prefix $h'\in\histories(t)$ that is successful. Let $h''$ the shortest of such prefixes.
    This implies that $t\in \bigcup_{h''\in\H^\varphi_{\orch,\stocC}} \T_{\orch,\stocC}(h'')$.
    By \Cref{lemma:union-sets-is-full-cylinder}, this set is equal to $\T_{\orch,\stocC}(\langle(\sigma_{10}\dots\allowbreak\sigma_{n0})\rangle)$.
    Since by definition $\supp(\prob_{\orch,\stocC}) \subseteq \T_{\orch,\stocC}(\langle(\sigma_{10}\dots\sigma_{n0})\rangle)$,
    we have that $\P^\stocC_\varphi(\orch) = \prob_{\orch,\stocC}(\bigcup_{h\in\H^\varphi_{\orch,\stocC}} \T_{\orch,\stocC}(h)) = \prob_{\orch,\stocC}(\T_{\orch,\stocC}(\langle(\sigma_{10}\dots\allowbreak\sigma_{n0})\rangle)) = 1$.

    \noindent
    ($\Leftarrow$) Assume an orchestrator $\orch$ is such that $\P^\stocC_\varphi(\orch) = 1$.
    This implies that for all orchestrator infinite executions $t\in\supp(\prob_{\orch,\stocC}) \subseteq \T_{\orch,\stocC}(\langle(\sigma_{10}\dots\sigma_{n0})\rangle)$, there is a prefix $h\in\histories(t)$ such that $h\in \H_{\orch,\stocC}^\varphi$ and $t\in \T_{\orch,\stocC}(h)$. 
    This means $t$ is successful, and therefore, all executions are successful, i.e. the definition of realizability.
\end{proof}

\begin{theorem}
    \label{thm:cost-optimality-implies-realizability}
    Assume $\varphi$ is realizable.
    If an orchestrator $\orch$ satisfies \Cref{eq:lexicographic-objective}, then it realizes the specification $\varphi$.
\end{theorem}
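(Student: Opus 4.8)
The plan is to derive this as an almost immediate corollary of \Cref{thm:optimality-iff-realizability}, using only the first component of the lexicographic objective in \Cref{eq:lexicographic-objective}. The key observation is that realizability of $\varphi$ guarantees the existence of at least one orchestrator achieving satisfaction probability $1$, so the supremum of $\P^\stocC_\varphi$ over all orchestrators equals $1$; hence every $\P^\stocC_\varphi$-optimal orchestrator attains this value, and optimality of the probability already forces realizability.

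First, I would invoke the hypothesis that $\varphi$ is realizable: by definition there exists an orchestrator $\orch^*\in\Gamma(\stocC)$ that realizes $\varphi$ with community $\stocC$. By the ($\Rightarrow$) direction of \Cref{thm:optimality-iff-realizability}, this gives $\P^\stocC_\varphi(\orch^*) = 1$. Since probabilities are bounded above by $1$, it follows that $\sup_{\tau\in\Gamma(\stocC)} \P^\stocC_\varphi(\tau) = 1$, and moreover this supremum is attained (by $\orch^*$), so $\Gamma_{\P^\stocC_\varphi}$ is nonempty and consists exactly of the orchestrators $\orch$ with $\P^\stocC_\varphi(\orch)=1$.

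Second, I would unpack \Cref{eq:lexicographic-objective}: if $\orch$ satisfies it, then in particular $\orch\in\Gamma_{\P^\stocC_\varphi}$, i.e. $\orch$ is $\P^\stocC_\varphi$-optimal. Combining with the previous paragraph, $\P^\stocC_\varphi(\orch) = \sup_{\tau\in\Gamma(\stocC)}\P^\stocC_\varphi(\tau) = 1$. Finally, applying the ($\Leftarrow$) direction of \Cref{thm:optimality-iff-realizability} to $\orch$, we conclude that $\orch$ realizes $\varphi$ with community $\stocC$, which is the claim. The second (cost) component of \Cref{eq:lexicographic-objective} plays no role here; it only refines the choice among already-realizing orchestrators.

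I do not anticipate a genuine obstacle: the entire argument is a two-step chain through \Cref{thm:optimality-iff-realizability}, and the only thing to be careful about is that the $\sup$ defining $\P^\stocC_\varphi$-optimality is actually achieved, which is exactly what the realizing orchestrator $\orch^*$ witnesses. If one wanted to be fully rigorous about the $\sup$ being attained rather than merely approached, one notes that $\P^\stocC_\varphi(\orch^*)=1$ is already the maximum possible value of a probability, so no limiting argument is needed.
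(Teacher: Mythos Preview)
Your proposal is correct and essentially identical to the paper's own proof: both invoke realizability to obtain a witness orchestrator with satisfaction probability~$1$ via the forward direction of \Cref{thm:optimality-iff-realizability}, conclude that any $\P^\stocC_\varphi$-optimal orchestrator (in particular $\orch$, by the first clause of \Cref{eq:lexicographic-objective}) must also achieve probability~$1$, and then apply the backward direction of \Cref{thm:optimality-iff-realizability}. Your version is slightly more explicit about why the supremum is attained, but the structure and ingredients are the same.
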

\begin{proof}
    Since by assumption $\varphi$ is realizable, then there exists an orchestrator $\orch'$ that realizes it.
    By \Cref{thm:optimality-iff-realizability}, we can deduce that the optimal value of $\P^\stocC_\varphi(\orch')$ is $1$.
    Moreover, by assumption and by \Cref{eq:lexicographic-objective}), it follows that $\orch \in \Gamma_{\P^\stocC_\varphi}$, i.e. $\P^\stocC_\varphi(\orch) = 1$, by the arguments above.
    Finally, again by \Cref{thm:optimality-iff-realizability}, we get that $\orch$ realizes $\varphi$.
\end{proof}

%
%
%
%
\noindent
Finally, we formally state the stochastic version of our problem:
\begin{problem}[Stochastic Composition for \LTLf Specifications]
    \label{prob:stoc-comp}
    Given the pair $(\stocC, \varphi)$, where $\varphi$ is an \LTLf task specification over the set of propositions $A$, and $\stocC$ is a community of $n$ stochastic services $\stocC = \{\stocS_1,\dots,\stocS_n\}$, compute, if it exists, an orchestrator that is optimal according to \Cref{eq:lexicographic-objective}.
\end{problem}
\noindent
Interestingly, \Cref{thm:optimality-iff-realizability} and \Cref{thm:cost-optimality-implies-realizability} show that Problem \ref{prob:stoc-comp} is a proper generalization of Problem \ref{prob:nondet-comp}.
In particular, one can reduce the problem of finding an orchestrator in non-stochastic setting
to the problem in stochastic setting by considering arbitrary services' probability distributions for $P_i(\sigma_i,a)$, for any pair $\sigma_i$ and $a$, whose support is compatible with $\delta_i$, and then check whether $\max_\orch \P^\stocC_{\varphi}(\orch) = 1$.

\subsection{Stochastic Services Solution Technique}
Our solution technique is based on finding an optimal policy for a bi-objective lexicographic optimization on a specifically built MDP. In particular, we consider a variant of the framework introduced in \cite{busatto2023bi}: while as the second objective, they considered the expected number of steps to a target, here we consider the expected cost.
Our technique breaks down into the following steps: (1) first, we compute the equivalent \NFA of an \LTLf formula, $\A_\varphi$, and (2) we consider the \DFA $\A_\tk$, as in the non-stochastic setting;
then (3) we compute a \emph{product of $\A_\tk$ with the stochastic services in $\stocC$, obtaining a new MDP}, $\M'$, that we call the ``composition MDP''; (4) we find a policy $\pi$ for $\M'$ that is optimal w.r.t. the bi-objective lexicographic function, as in \cite{busatto2023bi}, and then (5) we derive an orchestrator $\orch$ from $\pi$ that is optimal w.r.t. Equation \ref{eq:lexicographic-objective}.
We now detail each step.
First, observe that \textbf{Step 1 and 2} are the same as the non-stochastic setting. 

\noindent\textbf{Step 3.}
Consider a task specification $\varphi$ and a community of stochastic services $\stocC$. 
Let $\A_\tk = (A\times Q, Q, q_0, F, \delta_\tk)$ be the \DFA associated to the \NFA $\A_\varphi$.
We define the \emph{Composition MDP} $\M = (S', A', P', s'_0)$ as follows:
$S' = Q\times\Sigma_1\times\cdots\times \Sigma_n$; $A' = A\times Q\times \{1\dots n\}$; $s'_0 = (q_0,\sigma_{10}\dots\sigma_{n0})$; $P'(q',\sigma'_1\dots\sigma'_i\dots\sigma'_n|q,\sigma_1\dots\allowbreak\sigma'_i\dots\allowbreak\sigma_n, (a,i)) = P_i(\sigma'_i|\sigma_i, a)$ if $\delta_\tk(q,(a,q')) = q'$.
Moreover, let the composition cost function $C': S'\times A' \to \mathbb{R}^+$ be defined as $C'((q,\sigma_1\dots\sigma_i\dots\sigma_n), \allowbreak (a,q,i)) = C_i(\sigma_i, a)$.
%
%
%
%
%
%

We are interested in computing optimal policies for $\M$, where the optimality is defined as follows.
Consider the target states $T = F\times F_1\times\cdots\times F_n$.
We consider the bi-objective lexicographic optimization over $\M'$, similarly to what has been done in \cite{busatto2023bi}. In particular,
we first consider the probability of reaching a set of target states $T$ from $s\in S'$, following a policy $\pi$ over the MDP $\M'$, denoted with $\prob_{\M'_\pi,s}(\lozenge T)$;
with $\Pi_{\M',s}(\lozenge T)$, we denote the set of policies with the maximum probability of reaching $T$, i.e. $\arg\max_\pi \prob_{\M'_\pi,s}(\lozenge T)$.
Then, we consider the cost of the shortest prefix of $\rho$ that reaches one of the target states in $T$, i.e. $\cost_{T}(\rho) = \sum_{k=0}^{i} C'(s'_k, a'_k)$ if $\rho[i]\in T$ and for all $j < i$, $\rho[j] \not\in T$.
An optimal solution for $\M'$ is a policy $\pi$ that minimizes the conditional expected cost of reaching a target state $\mathbb{E}_{\M'_\pi,s'_0}[\cost_{T}(\rho) | \lozenge T]$ among the policies in $\Pi_{\M,s'_0}(\lozenge T)$, that is, the policies which maximize $\prob_{\M_\pi,s'_0}(\lozenge T)$, i.e.:
\begin{equation}
    \label{eq:mdp-lexicographic-objective}
    \pi\in \Pi_{\M',s_0}(\lozenge T) \text{ and } \pi\in\arg\min_{\pi'} \mathbb{E}_{\rho\sim\M'_\pi,s'_0}\big[\cost_T(\rho) | \lozenge T\big]
\end{equation}
\noindent\textbf{Step 4.}
The solution technique we will use is based on the work \cite{busatto2023bi}, where the authors propose a two-stage technique to find an optimal policy for a bi-objective lexicographic function in the form of \Cref{eq:mdp-lexicographic-objective}. 
First, we compute the set of policies (in the form of a set of optimal actions for each state) that maximize the probability of reaching the target states; however, this set of policies also contains the ``deferral'' policies, i.e. policies that defer the actual reaching of the target states indefinitely, but in such a way that the target can still be reached with maximum probability at any moment.
Then, we consider a ``pruned MDP'' in which \myi only optimal action can be taken, and \myii only states from which the target can be reached are kept. The new MDP is used to find policies that minimize the expected cost of reaching the target. By construction, the optimal policy of the pruned MDP guarantees the target is always reached since any deferral policy will incur an infinite cost.
The difference between our scenario and \cite{busatto2023bi} is that they consider the length of the path, rather than its cost, as the second objective function. Nevertheless, it is easy to see that their approach works if, instead of considering the expected length of successful paths, we consider their expected total costs (i.e. minimizing path length can be seen as minimizing costs with each transition having unitary cost).
Note that the techniques used to find the solutions are standard: the first stage requires solving the maximal reachability
probability problem \cite{DBLP:phd/us/Alfaro97} on the composition MDP with the accepting end
components as the set of states $T$.
The second stage requires solving a stochastic shortest path problem \cite{puterman1994markov} on the pruned MDP.
The two subproblems can be solved efficiently using standard planning algorithms, e.g., Value Iteration or Linear Programming.

\noindent\textbf{Step 5.}
Once an optimal policy is found, we can obtain its equivalent $\orch$ as follows: for any finite prefix of a run $\rho = (q_0,\sigma_{10}\dots\sigma_{n0}),\allowbreak(a_1,q_1,o_1),\dots(a_m,q_m,o_m),\allowbreak(q_m,\sigma_{1m}\dots\sigma_{nm})$,
we set $\orch((\sigma_{10}\dots\sigma_{n0})\allowbreak\dots(\sigma_{1m}\dots\sigma_{nm})) = (a_{m+1},o_{m+1})$, where $\pi(\rho) = (a_{m+1}, q_{m+1}, o_{m+1})$.

\medskip\noindent
Now we aim to establish a relationship between optimal orchestrators according to \Cref{eq:lexicographic-objective}, and optimal policies for $\M'$ according to \Cref{eq:mdp-lexicographic-objective}.
Given a run $\rho = (q_0,\sigma_{10}\dots\sigma_{n0}),(a_1,q_1,o_1)\dots$, we define the trace $t = \stoc\tau_{\varphi,\stocC}(\rho) = (\sigma_{10}\dots\sigma_{n0}),(a_1,o_1),\dots$. 
%
The following lemma shows that once fixed a policy $\pi$ over $\M'$, there is a one-to-one correspondence between the paths on $\M'$ following $\pi$ and the executions of the equivalent orchestrator of $\pi$, $\orch$.

\begin{lemma}
    \label{lemma:rho-iff-h}
    Let $\pi$ be a policy for $\M'$ and let $\orch$ be its equivalent orchestrator.
    Moreover, let $\rho\in \Paths_{{\M'}_\pi}^\omega$ and $t$ be a trace such that $t=\stoc\tau_{\varphi,\stocC}(\rho)$.
    Then, $\rho\in\Paths_{\M'_\pi}^\omega(\langle s'_0 \rangle)$ iff $t\in \T_{\orch,\stocC}(\langle (\sigma_{10}\dots\sigma_{n0})\rangle)$.
\end{lemma}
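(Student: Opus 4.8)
The statement is a bi-conditional relating paths in the composition MDP $\M'$ under a fixed policy $\pi$ with executions of the equivalent orchestrator $\orch$. Since $\stoc\tau_{\varphi,\stocC}$ is the map that strips the $Q$-component from the action labels and the state labels of a run, the natural strategy is to argue that this map is a bijection between the relevant sets, and that it preserves ``being a valid path/execution.'' The plan is to prove the two inclusions separately, in both cases proceeding by induction on the length of the finite prefixes (equivalently, establishing the correspondence step by step along the infinite object and invoking the fact that a path/execution is valid iff all its finite prefixes are).

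\textbf{($\Rightarrow$) direction.} Assume $\rho = (q_0,\sigma_{10}\dots\sigma_{n0}),(a_1,q_1,o_1),(q_1,\sigma_{11}\dots\sigma_{n1}),\dots \in \Paths^\omega_{\M'_\pi}(\langle s'_0\rangle)$. I would show by induction on $m$ that the length-$m$ prefix of $t = \stoc\tau_{\varphi,\stocC}(\rho)$ is a valid history of $\stocC$ and is compatible with $\orch$. The base case $m=0$ is immediate since $t$ starts with $(\sigma_{10}\dots\sigma_{n0})$, which matches the initial configuration of $\stocC$, and $\rho$ starts at $s'_0 = (q_0,\sigma_{10}\dots\sigma_{n0})$. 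For the inductive step, the fact that $\rho$ is a path under $\pi$ means $a_{m+1}$-labelled transition has positive probability in $P'$ and $(a_{m+1},q_{m+1},o_{m+1}) = \pi(\rho[0{\dv}m])$. By the definition of $P'$, positivity of $P'((q_{m+1},\dots)\mid (q_m,\dots),(a_{m+1},o_{m+1}))$ forces $\sigma_{o_{m+1},m+1}\in\supp(P_{o_{m+1}}(\sigma_{o_{m+1},m},a_{m+1}))$ and all other service states unchanged — exactly clause \myiv in the definition of a stochastic trace of $\stocC$. And by the definition of $\orch$ from $\pi$ in Step 5, $\orch$ applied to the state-configuration prefix returns $(a_{m+1},o_{m+1})$, so $t$ is compatible with $\orch$. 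Hence $t\in\T_{\orch,\stocC}(\langle(\sigma_{10}\dots\sigma_{n0})\rangle)$.

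\textbf{($\Leftarrow$) direction.} Conversely, assume $t\in\T_{\orch,\stocC}(\langle(\sigma_{10}\dots\sigma_{n0})\rangle)$ with $t = \stoc\tau_{\varphi,\stocC}(\rho)$. I would again induct on prefix length to show $\rho$ is a valid path under $\pi$. The key point is that $\rho$'s action labels carry the extra $Q$-component $q_{m+1}$, and one must check this component is consistent: since $\rho$ is given with $t = \stoc\tau_{\varphi,\stocC}(\rho)$, the $q_{m+1}$ appearing in $\rho$ is whatever $\rho$ supplies, and because $\orch$ is \emph{defined} from $\pi$ via $\pi(\rho) = (a_{m+1},q_{m+1},o_{m+1})$, we get that the action $\pi$ prescribes at $\rho[0{\dv}m]$ is precisely $(a_{m+1},q_{m+1},o_{m+1})$, so the action labels of $\rho$ are $\pi$-compatible. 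It remains to check the state transitions of $\rho$ are legal in $\M'$: from compatibility of $t$ with $\orch$, clause \myiv gives $\sigma_{o_{m+1},m+1}\in\supp(P_{o_{m+1}}(\sigma_{o_{m+1},m},a_{m+1}))$, and since $\delta_\tk(q_m,(a_{m+1},q_{m+1})) = q_{m+1}$ holds by construction (the DFA $\A_\tk$ is total on the extended alphabet in the sense that this is exactly when a transition exists), the definition of $P'$ yields a positive-probability transition. So $\rho\in\Paths^\omega_{\M'_\pi}(\langle s'_0\rangle)$.

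\textbf{Main obstacle.} The routine parts are the two inductions; the subtle point worth spelling out carefully is the handling of the $Q$-component in the $(\Leftarrow)$ direction — one has to be careful that $\stoc\tau_{\varphi,\stocC}$ discards the $q_i$'s, so a given trace $t$ corresponds to \emph{many} runs $\rho$ a priori, and the lemma is asserting the correspondence only for the specific $\rho$ with $t = \stoc\tau_{\varphi,\stocC}(\rho)$ \emph{and} whose action-labels' $Q$-components agree with what $\pi$ (hence $\orch$) would pick. Making explicit that $\orch$'s definition from $\pi$ ``remembers'' the $Q$-component along the run — i.e. that the $q_i$ in $\rho$ is forced by $\delta_\tk$ being a partial function together with $\pi$'s choice — is the crux. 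Once that is pinned down, verifying the clauses of ``valid trace'' and ``compatible with $\orch$'' against the definition of $P'$ is mechanical.
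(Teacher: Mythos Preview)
Your proposal is correct and follows essentially the same approach as the paper: both argue by induction on the position (equivalently, prefix length), verifying at each step that the validity conditions for paths in $\M'_\pi$ (positive $P'$-transition and $\pi$-compatible action) correspond to those for executions of $\orch$ over $\stocC$ (clause (iv) of stochastic traces and $\orch$-compatibility via the Step~5 definition). You in fact spell out the $(\Leftarrow)$ direction and the handling of the $Q$-component more carefully than the paper, which simply remarks that ``the same arguments can be applied in the other direction.''
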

\begin{proof}
\noindent
($\Rightarrow$) Let $\rho\in \Paths_{\M'_{\pi}}^\omega(\langle s'_0 \rangle)$.
Consider the history $h=\tau_{\varphi,\stocC}(\rho)$.
We prove the claim by induction on the position of the path/history.

\noindent
\emph{Base case}: we have the claim holds for position $0$ because $\rho[0] = (q_0,\sigma_{10}\dots\sigma_{n0})$, and $h[0] = \sigma_{10}\dots\sigma_{n0}$. Therefore, $\langle h[0]\rangle$ satisfies the conditions of the definitions of history and execution of $\orch$ iff $s'_0\in S'$.

\noindent
\emph{Inductive case}: assume the claim holds up to position $k\ge 0$.
Consider the $(k+1)$-th action according to $\pi$, i.e. $\pi(\rho[0\dv k]) = (a_{k+1}, q_{k+1}, o_{k+1})$,
and its successor state $\rho[k+1] = (q_{k+1},\sigma_{1,k+1},\dots,
\sigma_{n,k+1})$.
By construction of $\M'$, we have that 
\myi $\sigma_{i,k+1}\in \Sigma_i$ for all services, 
\myii $o_{k+1}\in\{1\dots n\}$, 
\myiii $a_{k+1}\in A$, and 
\myiv $\sigma_{k+1}\in\supp(P_{o_{k+1}}(\sigma_{o_{k+1},k},a_{k+1}))$.
Moreover, by construction of $\orch$, $(a_{k+1}, o_{k+1})=\allowbreak\orch(\allowbreak\states(h[0\dv k]))$, hence $h'=h[0\dv k]\cdot (a_{k+1},o_{k+1}),(\sigma_{1,k+1}\dots\sigma_{n,k+1})$ is a proper (finite) execution.
The same arguments can be applied in the other direction. By induction the claim also holds for any arbitrary position, and therefore $\rho\in\Paths_{\M'_\pi}^\omega(\langle s'_0 \rangle)$ iff $t\in \T_{\orch,\stocC}(\langle (\sigma_{10}\dots\sigma_{n0})\rangle)$.
\end{proof}

\begin{lemma}
    \label{lemma:composition-mdp-and-community-same-prob}
    Let $\pi$ a policy on $\M'$, $\orch$ be its equivalent orchestrator,
    $\rho = s'_0a_1\dots s'_m\in \Paths_{\M'_\pi}(s'_0)$ 
    be a finite path on $\M'$, and $\stoch = \stoc\tau_{\varphi,\stocC}(\rho)$ be its associated history.
    Then, $\prob_{\M_\pi,s'_0}(\Paths_{\M_\pi}^\omega(\rho)) = \prob_{\orch,\stocC}(\T_{\orch,\stocC}(h))$.  
\end{lemma}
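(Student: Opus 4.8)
The plan is to unfold both sides using their defining product formulas and then match the factors one by one, using the construction of $\M'$ and the definition of $\orch$ as the orchestrator equivalent to $\pi$.

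Write $\rho = s'_0 a'_1 s'_1 \cdots a'_m s'_m$ with $s'_k = (q_k, \sigma_{1k}\dots\sigma_{nk})$ and $a'_k = (a_k, q_k, o_k)$. Since $\rho\in\Paths_{\M'_\pi}(s'_0)$, we have $a'_{k+1} = \pi(\rho[0\dv k])$ for all $0 \le k < m$, so by the definition of the MDP probability measure $\prob_{\M'_\pi,s'_0}(\Paths^\omega_{\M'_\pi}(\rho)) = \prod_{k=0}^{m-1} P'(s'_{k+1}\mid s'_k, a'_{k+1})$. On the community side, $h = \stoc\tau_{\varphi,\stocC}(\rho) = (\sigma_{10}\dots\sigma_{n0}),(a_1,o_1),\dots,(\sigma_{1m}\dots\sigma_{nm})$ has length $m$; by \Cref{lemma:rho-iff-h} (its inductive argument restricted to the finite prefixes of $\rho$), $h$ is a genuine finite execution of $\orch$, so by \Cref{eq:probability-of-history} $\prob_{\orch,\stocC}(\T_{\orch,\stocC}(h)) = \prob_{\orch,\stocC}(h) = \prod_{k=1}^{m} P_{o_k}(\sigma_{o_k,k}\mid\sigma_{o_k,k-1},a_k)$. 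Since these are products over $k = 0,\dots,m-1$ and $k = 1,\dots,m$ respectively, it suffices to establish the per-step identity $P'(s'_{k+1}\mid s'_k, a'_{k+1}) = P_{o_{k+1}}(\sigma_{o_{k+1},k+1}\mid\sigma_{o_{k+1},k},a_{k+1})$ for every $0 \le k < m$.

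This per-step identity follows directly from the definitions. Fix $k$ and put $i = o_{k+1}$. By the definition of $\orch$ from $\pi$, the $(k{+}1)$-th action along $\rho$ is $a'_{k+1} = (a_{k+1}, q_{k+1}, i)$, and the transition $s'_k \xrightarrow{a'_{k+1}} s'_{k+1}$ in $\M'$ moves the $Q$-component from $q_k$ to $q_{k+1}$ (consistently with $\delta_\tk(q_k,(a_{k+1},q_{k+1})) = q_{k+1}$, which holds because $\rho$ is a path of $\M'$), moves the $i$-th service component from $\sigma_{i,k}$ to $\sigma_{i,k+1}$, and leaves the other service components unchanged; hence the definition of $P'$ gives $P'(s'_{k+1}\mid s'_k, a'_{k+1}) = P_i(\sigma_{i,k+1}\mid\sigma_{i,k},a_{k+1})$, which is exactly the corresponding factor on the community side. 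Multiplying these equalities over $k$ and re-indexing yields the claim, with the base case $m = 0$ being the empty-product convention matching $\prob_{\M'_\pi,s'_0}(\Paths^\omega_{\M'_\pi}(\langle s'_0\rangle)) = 1 = \prob_{\orch,\stocC}(\T_{\orch,\stocC}(\langle(\sigma_{10}\dots\sigma_{n0})\rangle))$. I do not expect a real obstacle here; the only points to keep straight are the off-by-one between MDP path indices (starting at $0$) and history positions (starting at $1$), and the bookkeeping that both $P'$ and the corresponding factor of \Cref{eq:probability-of-history} depend only on the single service component selected by the action.
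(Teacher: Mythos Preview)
Your proposal is correct and follows essentially the same approach as the paper: both unfold the cylinder probability and the history probability as products and match the factors term-by-term via the definition of $P'$ in $\M'$. Your version is more careful about bookkeeping (the index shift, the $m=0$ base case, and invoking \Cref{lemma:rho-iff-h} to ensure $h$ is actually a finite execution of $\orch$ so that \Cref{eq:probability-of-history} applies), but the underlying argument is the same three-line chain of equalities the paper gives.
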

\begin{proof}
    \begin{align}
        \prob_{\M'_\pi,s'_0}(\Paths_{\varphi,\stocC}^\omega(\rho)) &= \prod\limits_{k=1}^{m} P'(s'_{k} \mid s'_{k-1},(a_k,q_k,o_k)) \label{lemma:composition-mdp-and-community-same-prob:prob-of-cylinder} \\
        & = \prod\limits_{k=1}^m  P_{o_k}(\sigma_{o_k,k} \mid \sigma_{o_k,k-1}, (a_k,o_k)) \label{lemma:composition-mdp-and-community-same-prob:by-construction-of-p} \\
        & = \prob_{\orch,\stocC}(\T_{\orch,\stocC}(h)) \label{lemma:composition-mdp-and-community-same-prob:by-definition-of-prob-comm}
    \end{align}
where step \ref{lemma:composition-mdp-and-community-same-prob:prob-of-cylinder} is by definition of the probability of a cylinder set, 
step \ref{lemma:composition-mdp-and-community-same-prob:by-construction-of-p} by definition of $P'$ in $\M'$,
and
step \ref{lemma:composition-mdp-and-community-same-prob:by-definition-of-prob-comm} by \Cref{eq:prob-sat-goal}.
\end{proof}

\begin{lemma}
    \label{lemma:end-in-t-iff-successful}
    Let $\rho = s_0a_1\dots s_m\in \Paths_{\M'_\pi}$ 
    be a finite path on $\M'$, and let $h = \stoc\tau_{\varphi,\stocC}(\rho)$ be its associated history.
    Then, $s_m\in T$ iff $\successful(h)$.
\end{lemma}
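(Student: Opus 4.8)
The plan is to split both the target set $T = F\times F_1\times\cdots\times F_n$ and the predicate $\successful$ into an ``automaton part'' and a ``service part'' and to match them separately. Writing $s_m = (q_m,\sigma_{1m}\dots\sigma_{nm})$ and, by definition of $\stoc\tau_{\varphi,\stocC}$, $h = (\sigma_{10}\dots\sigma_{n0}),(a_1,o_1),\dots,(\sigma_{1m}\dots\sigma_{nm})$, we have $s_m\in T$ iff $q_m\in F$ and $\sigma_{im}\in F_i$ for all $i$, whereas $\successful(h)$ holds iff $\actions(h) = a_1\dots a_m\models\varphi$ and $\sigma_{im}\in F_i$ for all $i$ (recall $\lastt(\states(h)) = (\sigma_{1m}\dots\sigma_{nm})$). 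The service parts of the two conditions are literally the same statement, since $\stoc\tau_{\varphi,\stocC}$ copies the service configuration of $\rho$'s last state verbatim into the last configuration of $h$; and the policy $\pi$ plays no role beyond guaranteeing that $\rho$ is a legal path of $\M'$. So the whole lemma reduces to showing that $q_m\in F$ iff $a_1\dots a_m\models\varphi$.

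For the direction $s_m\in T\Rightarrow\successful(h)$ I would argue from the construction of $\M'$: a transition of $\M'$ from $(q,\sigma_1\dots\sigma_n)$ to $(q',\sigma'_1\dots\sigma'_n)$ under $(a,q',i)$ has positive probability only if $\delta_\tk(q,(a,q')) = q'$, i.e.\ $(q,a,q')\in\delta$. Hence the projection $q_0,q_1,\dots,q_m$ of $\rho$ onto the automaton component is a genuine run of $\A_\varphi$ on $a_1\dots a_m$; if moreover $q_m\in F$ this run is accepting, so $a_1\dots a_m\in\L(\A_\varphi)$, that is $\actions(h)\models\varphi$, and together with the service part this yields $\successful(h)$. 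One can equivalently repackage this via \Cref{thm:w-iff-successful-h} applied to the word over the alphabet of $\A_{\varphi,\C}$ naturally induced by $\rho$, whose $\A_{\varphi,\C}$-run is exactly the automaton-and-service projection of $\rho$: then $s_m\in T = F'$ says that this word lies in $\L(\A_{\varphi,\C})$, whence $h$ is successful.

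The converse, $\successful(h)\Rightarrow s_m\in T$, is the step I expect to require the most care, and the one I would scrutinize. Knowing $\actions(h)\models\varphi$ only guarantees that \emph{some} run of the nondeterministic $\A_\varphi$ on $a_1\dots a_m$ is accepting, not that the particular run $q_0\dots q_m$ taken along $\rho$ is accepting; likewise \Cref{thm:w-iff-successful-h} only returns \emph{some} $w\in\L(\A_{\varphi,\C})$ with $\tau_{\varphi,\C}(w) = h$, and such a $w$ may differ from the word induced by $\rho$ precisely in its automaton coordinates. To close this one needs an additional handle ensuring that the automaton-state choices recorded along $\rho$ cannot ``fall off'' the accepting part of $\A_\varphi$ whenever the underlying action sequence already satisfies $\varphi$ --- for instance a determinacy/unambiguity property of the reachable runs of the specific automaton produced by the \LTLf-to-\NFA translation, or a mild strengthening of the hypothesis on $\rho$ (e.g.\ restricting to paths compatible with a probability-maximizing policy, which never commits to an automaton component from which $T$ is unreachable). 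I would therefore establish the first implication unconditionally and, for the second, either invoke such a property or slightly reformulate the statement so that success is tested against the automaton run actually encoded in $\rho$; everything else is bookkeeping over the definitions of $\M'$, $\stoc\tau_{\varphi,\stocC}$, and $\successful$.
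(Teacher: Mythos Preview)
Your decomposition into an automaton part and a service part is exactly the paper's line of argument; the paper wraps it in an induction on the length of $\rho$, but the inductive step carries no content beyond what you state directly (namely that the automaton components $q_0,\dots,q_m$ of $\rho$ form a run of $\A_\varphi$ on $a_1\dots a_m$, and the service parts match verbatim). Your forward direction $s_m\in T\Rightarrow\successful(h)$ is correct and coincides with the paper's.

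Your hesitation about the converse is justified, and in fact the paper's proof has the same gap: it asserts ``$q_k\in F$ iff $a_1\dots a_k\models\varphi$'' by appeal to the correctness of $\A_\tk$, but Proposition~1 only yields $a_1\dots a_m\in\L(\A_\varphi)$ iff $(a_1,q_1)\dots(a_m,q_m)\in\L(\A_\tk)$ \emph{for some} choice of $q_1\dots q_m$, not for the particular sequence recorded along $\rho$. When $\A_\varphi$ is genuinely nondeterministic, $\rho$ may encode a non-accepting run over a word that is nevertheless in $\L(\A_\varphi)$, giving $\successful(h)$ with $q_m\notin F$. The paper does not invoke any determinacy or unambiguity property of the \LTLf-to-\NFA construction, nor does it restrict to policies that always choose a viable automaton continuation, so the backward implication is not established by the argument given. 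Your proposed remedies---either an unambiguity hypothesis on $\A_\varphi$, or restricting to paths of probability-maximizing policies (which will never commit to an automaton state from which $F$ is unreachable when an alternative exists)---are precisely what would be needed. In short: your forward direction is complete, and your diagnosis of the backward direction is sharper than the paper's own treatment.
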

\begin{proof}
    By induction on the length of the run/history.

    \noindent
    \emph{Base case}: $\rho_0 = \langle s'_0 \rangle = \langle(q_0,\sigma_{10}\dots\sigma_{n0})\rangle$. 
    Let $h_0 = \stoc\tau_{\varphi,\stocC}(\rho_0) = \langle(\sigma_{10}\dots\sigma_{n0})\rangle$.
    We have that $\rho_0[0] \in T$ iff (i.a) $q_0\in F$ and (i.b) $\sigma_{i0}\in F_i$ for all $1\le i\le n$.
    On the other hand, $h$ is successful iff (ii.a) $\actions(h_0) = \epsilon\models\varphi$ and (ii.b) $\sigma_{i0}\in F_i$.
    The claim holds because (i.b) is precisely (i.b), and (i.a) holds iff (ii.a) holds by the correctness of the construction of $\A_\tk$.

    \noindent
    \emph{Inductive case}: assume the claim holds for $\rho_{k-1} = (q_0,\sigma_{10}\dots\sigma_{n0}),(a_1,q_1,o_1),\dots,\allowbreak(a_{k-1},q_{k-1},o_{k-1}),(q_{k-1},\sigma_{1,k-1}\dots\sigma_{n,k-1})$
    and $h_{k-1} = \stoc\tau_{\varphi,\stocC}(\rho)$.
    %
    %
    Let $a'_k = (a_k,q_k,o_k)$ be any valid next action taken from $s_{k-1}$, and let $s_k = (q_k,\sigma_{1k}\dots\sigma_{nk})\in\supp(P_{o_k}(s_{k-1},a_k))$ the next possible state.
    Consider the sequence $r=q_0\dots q_k$. By construction of $\M'$, and correctness of $\A^d$, we have that $r$ is a run over $\A^d$, and that $q_k \in F$ iff $a_1\dots a_k\models \varphi$.
    By definition of $h_k = \stoc\tau_{\varphi,\stocC}(\rho_k)$, we also have that $\actions(h_k) = a_1\dots a_k$.
    Finally, we have that $s_k\in F$ iff \myi $q_k\in F$ and \myii for all $i$ $\sigma_{ik}\in F_i$ by construction of $\M'$; \myi holds iff \myiii $\actions(h_k)\models \varphi$ by the arguments above; finally, \myii and \myiii hold iff $\successful(h)$.
\end{proof}
\noindent
Let $\Paths_{T,\M'_\pi}(s'_0)$ be the set of finite paths following $\pi$ on $\M'$ such that they start with $s'_0$ and enter in a state in $T$ only at the end of the path and for the first time, i.e. $\Paths_{T,\M'_\pi}(s'_0) = ((S' \setminus T ) \times A)^*T \cap \Paths_{\M'_\pi}(s'_0)$.

\begin{lemma}
    \label{lemma:shortest-successful-stories}
    $\rho\in\Paths_{T,\M'_\pi}(s'_0)$ iff $\stoc\tau_{\orch,\stocC}(\rho)\in\H^\varphi_{\orch,\stocC}$
\end{lemma}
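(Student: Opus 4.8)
The plan is to prove both directions by combining the characterization lemmas already established, essentially decomposing the statement $\rho \in \Paths_{T,\M'_\pi}(s'_0) \iff \stoc\tau_{\varphi,\stocC}(\rho) \in \H^\varphi_{\orch,\stocC}$ into the three defining conditions on each side and matching them up via \Cref{lemma:end-in-t-iff-successful} and \Cref{lemma:rho-iff-h}. Recall that $\rho \in \Paths_{T,\M'_\pi}(s'_0)$ means (a) $\rho$ starts from $s'_0$ and follows $\pi$, (b) the last state $s_m \in T$, and (c) no earlier state $s_j$ ($j < m$) is in $T$. Dually, $h = \stoc\tau_{\varphi,\stocC}(\rho) \in \H^\varphi_{\orch,\stocC}$ means $h$ is a finite execution of $\orch$ starting from $(\sigma_{10}\dots\sigma_{n0})$ that (a') is a valid execution, (b') is successful, and (c') has no successful proper prefix. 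The strategy is to show (a) $\iff$ (a'), (b) $\iff$ (b'), and (c) $\iff$ (c').

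First I would handle condition (a) $\iff$ (a'): this is immediate from \Cref{lemma:rho-iff-h}, which gives that $\rho \in \Paths_{\M'_\pi}^\omega(\langle s'_0\rangle)$ iff the corresponding trace lies in $\T_{\orch,\stocC}(\langle(\sigma_{10}\dots\sigma_{n0})\rangle)$; the finite-path version follows by the same inductive argument (or by noting any finite path extends to an infinite one and conversely). Next, for (b) $\iff$ (b'): since $\rho$ is a finite path whose associated history is $h$, \Cref{lemma:end-in-t-iff-successful} directly gives $s_m \in T$ iff $\successful(h)$. The only subtlety is to make sure $\stoc\tau_{\varphi,\stocC}$ maps $\rho$ precisely to the history whose last state configuration is the projection of $s_m$, and whose action run is $a_1\dots a_m$ — but this is how $\stoc\tau_{\varphi,\stocC}$ is defined, so no real work is needed here.

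The main obstacle — though still routine — is condition (c) $\iff$ (c'), which concerns the \emph{first time} the target/success is reached. I would argue as follows: the prefixes of $\rho$ of the form $\rho[0\dv j]$ are in bijective correspondence (via $\stoc\tau_{\varphi,\stocC}$) with the prefixes of $h$ that end in a service-state configuration, because $\stoc\tau_{\varphi,\stocC}$ is defined position-by-position and preserves the alternating structure. Under this correspondence, $\rho[0\dv j]$ ends in a state $s_j \in T$ iff the prefix history $h_j$ is successful, by \Cref{lemma:end-in-t-iff-successful} applied to the prefix path $\rho[0\dv j]$. Hence ``$s_j \notin T$ for all $j < m$'' is equivalent to ``no proper prefix $h_j$ of $h$ is successful,'' which is exactly condition (c'). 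Assembling (a)$\wedge$(b)$\wedge$(c) $\iff$ (a')$\wedge$(b')$\wedge$(c') completes the proof. The only care needed is to confirm \Cref{lemma:end-in-t-iff-successful} is genuinely stated for arbitrary finite paths (it is — its proof is by induction on length), so it applies to every prefix $\rho[0\dv j]$, not just to $\rho$ itself.
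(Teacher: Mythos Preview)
Your proposal is correct and follows essentially the same approach as the paper: both arguments decompose the membership conditions on each side and match them up using \Cref{lemma:rho-iff-h} for the ``is an execution'' part, \Cref{lemma:end-in-t-iff-successful} for the ``last state in $T$ iff successful'' part, and \Cref{lemma:end-in-t-iff-successful} applied to proper prefixes for the ``first time'' condition. Your write-up is somewhat more explicit about the prefix bijection and about extending \Cref{lemma:rho-iff-h} from infinite to finite paths, but the underlying structure is identical to the paper's proof.
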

\begin{proof}
    By \Cref{lemma:end-in-t-iff-successful}, $\rho\in \Paths_{T,\M'_\pi}$ iff $h=\stoc\tau_{\orch,\stocC}(\rho)$ is successful.
    Moreover, by \Cref{lemma:rho-iff-h}, $\rho\in\Paths_{T,\M'_\pi}\subseteq\Paths_{\M'_\pi}$ iff $h$ is an execution of $\orch$.
    Furthermore, by assumption, any finite prefix $\rho'$, say of length $m$, of $\rho$, is such that $\rho'[m] \not\in T$. 
    Then, again by \Cref{lemma:end-in-t-iff-successful}, this holds iff $h'=\stoc\tau_{\varphi,\stocC}(\rho')$ is not successful, meaning that does not exist a prefix $h'\in\prefixes(h)$ with $h'\neq h$ such that $h'$ is successful.
    But this is precisely the membership condition for $\H^\varphi_{\orch,\stocC}$
\end{proof}
\noindent
Intuitively, \Cref{lemma:rho-iff-h} shows there is a one-to-one correspondence (modulo choices of $q_0\dots q_m$ in $\rho$) between paths $\rho$ in $\M'$ with $\pi$ and traces $t\in\T_{\orch,\C}$;
\Cref{lemma:composition-mdp-and-community-same-prob} shows that the probabilities of finite paths and histories are the same;
\Cref{lemma:end-in-t-iff-successful} shows that paths that end with states in $T$ correspond to successful histories;
and
\Cref{lemma:shortest-successful-stories} shows a correspondence between paths in $\Paths_{T,\M'_\pi}$ and $\H_{\orch,\stocC}^\varphi$.
This result shows the correctness of our technique:

\begin{theorem}
    \label{thm:pi-optimal-implies-orch-optimal}
    Let $(\stocC, \varphi)$ be an instance of Problem 2, and let $\M'$ be the composition MDP for $\stocC$ and $\varphi$.
    If $\pi$ is optimal (w.r.t. \Cref{eq:mdp-lexicographic-objective}) iff its equivalent orchestrator $\orch$ is optimal (w.r.t. \Cref{eq:lexicographic-objective}).
\end{theorem}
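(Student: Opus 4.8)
The plan is to reduce the biconditional to two value equalities relating the MDP‑side quantities that define \Cref{eq:mdp-lexicographic-objective} to the orchestrator‑side quantities that define \Cref{eq:lexicographic-objective}, and then to transport lexicographic optimality across the policy/orchestrator correspondence. Concretely, for a policy $\pi$ with equivalent orchestrator $\orch$ I would prove
\[
\P^\stocC_\varphi(\orch) = \prob_{\M'_\pi,s'_0}(\lozenge T)
\qquad\text{and}\qquad
\J^\stocC_\varphi(\orch) = \expected_{\M'_\pi,s'_0}\big[\cost_T(\rho)\mid \lozenge T\big],
\]
the second equality being read whenever the common value $\P^\stocC_\varphi(\orch)=\prob_{\M'_\pi,s'_0}(\lozenge T)$ is positive (otherwise both conditional costs are vacuous). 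Once these hold, a policy and its equivalent orchestrator carry the same pair of objective values, so lexicographic optimality of one will be equivalent to lexicographic optimality of the other.

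For the probability equality I would invoke \Cref{lemma:shortest-successful-stories}, which makes $\stoc\tau_{\varphi,\stocC}$ a bijection between $\Paths_{T,\M'_\pi}(s'_0)$ and $\H^\varphi_{\orch,\stocC}$. On the MDP side the paths in $\Paths_{T,\M'_\pi}(s'_0)$ induce pairwise disjoint cylinders whose union is exactly the event $\lozenge T$; on the community side the sets $\T_{\orch,\stocC}(h)$ for $h\in\H^\varphi_{\orch,\stocC}$ are pairwise disjoint by the observation following \Cref{eq:prob-sat-goal}, and their union is by definition the event measured by $\P^\stocC_\varphi(\orch)$. Both quantities are therefore countable sums of cylinder/trace‑set probabilities indexed by corresponding elements of the bijection, and \Cref{lemma:composition-mdp-and-community-same-prob} equates the paired summands, yielding the first equality.

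For the cost equality, fix a run $\rho$ reaching $T$ and put $h=\stoc\tau_{\varphi,\stocC}(\rho)$. By \Cref{lemma:end-in-t-iff-successful} the first index at which $\rho$ enters $T$ equals the length of the shortest successful prefix of $h$, and by the definition of the composition cost $C'$ the prefix cost $\cost_T(\rho)$ equals $\sum_k C_{o_k}(\sigma_{o_k,k-1},a_k)$ over that same prefix of $h$, i.e.\ the random variable whose conditional expectation defines $\J^\stocC_\varphi$. Again by \Cref{lemma:end-in-t-iff-successful} the conditioning events $\lozenge T$ and $\successful(h)$ coincide under the correspondence, and by \Cref{lemma:rho-iff-h} together with \Cref{lemma:composition-mdp-and-community-same-prob} the measures $\prob_{\M'_\pi,s'_0}$ and $\prob_{\orch,\stocC}$ agree on matching events; pushing the expectation through the bijection gives the second equality.

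It remains to transfer optimality and to flag the one delicate point. The two equalities say the pair of objective values of any policy $\pi$ equals the pair of objective values of its equivalent orchestrator; to close both directions at once I would observe that the policy‑to‑orchestrator map is onto $\Gamma(\stocC)$: given $\orch$, define $\pi$ on a path $\rho$ with last state $(q,\vec\sigma)$ by $\pi(\rho)=(a,q',o)$ with $(a,o)=\orch(\states(\stoc\tau_{\varphi,\stocC}(\rho)))$ and $q'$ any automaton state with $(q,a,q')\in\delta$ — which exists once $\A_\varphi$ is taken complete, a harmless standard modification adding a rejecting sink — so $\orch$ is its equivalent orchestrator. Hence $\sup_\pi \prob_{\M'_\pi,s'_0}(\lozenge T)=\sup_{\orch}\P^\stocC_\varphi(\orch)$, this map sends maximal‑probability policies onto maximal‑probability orchestrators, and on those two sets the infima of the conditional cost coincide; therefore $\pi$ satisfies \Cref{eq:mdp-lexicographic-objective} iff $\orch$ satisfies \Cref{eq:lexicographic-objective}. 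I expect the main obstacle to be precisely this lifting/surjectivity step together with making the event identifications airtight: one must check that resolving the automaton‑state component of a policy never alters the induced distribution over service‑state traces (once the actions are fixed, the $Q$‑component of a run is a deterministic artefact), that the ``first visit to $T$''/``shortest successful prefix'' matching is compatible with the $\sigma$‑algebras generated on the two sides, and that conditioning commutes with the bijection in the positive‑probability case; the remainder is bookkeeping on the preceding four lemmas.
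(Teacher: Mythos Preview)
Your proposal is correct and follows essentially the same approach as the paper: both establish the value equalities $\P^\stocC_\varphi(\orch) = \prob_{\M'_\pi,s'_0}(\lozenge T)$ and $\J^\stocC_\varphi(\orch) = \expected_{\M'_\pi,s'_0}[\cost_T(\rho)\mid \lozenge T]$ via \Cref{lemma:composition-mdp-and-community-same-prob} and \Cref{lemma:shortest-successful-stories} (with \Cref{lemma:rho-iff-h} and \Cref{lemma:end-in-t-iff-successful} in support), and then transfer lexicographic optimality. You are in fact more careful than the paper on one point it glosses over, namely the surjectivity of the policy-to-orchestrator correspondence needed to equate the suprema and infima on the two sides.
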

\begin{proof}
    First, we show that $\pi = \arg\max_{\pi'} \prob_{\M_\pi',s'_0}(\lozenge T)$ iff $\orch = \arg\max_{\orch'} \P^\stocC_{\varphi}(\orch')$.
    For any pair $\pi$ and its equivalent $\orch$, we have:
    \begin{align}
        \prob_{\pi,s'_0}(\lozenge T) &= \sum\limits_{\rho_T\in \Paths_{T,\pi}(s'_0)} \prob_{\M'_{\pi'},s'_0}(\Paths_{\M'_\pi,s'_0}^\omega(\rho_T)) \label{thm:pi-optimal-implies-orch-optimal:def-reachability}\\
        & = \sum\limits_{\stoch\in\H^\varphi_{\orch,\stocC}} \prob_{\orch,\stocC}(\T_{\orch,\stocC}(\stoch)) \label{thm:pi-optimal-implies-orch-optimal:by-lemma}\\
        & = \prob_{\orch,\stocC}\bigg(\bigcup_{h\in\H^\varphi_{\orch,\stocC}} \T_{\orch,\stocC}(h) \bigg) \label{thm:pi-optimal-implies-orch-optimal:disjointness}\\
        & = \P^\stocC_\varphi(\orch) \label{thm:pi-optimal-implies-orch-optimal:by-definition}
   \end{align}
where step \ref{thm:pi-optimal-implies-orch-optimal:def-reachability} is by definition of probabilistic reachability,
step \ref{thm:pi-optimal-implies-orch-optimal:by-lemma} is by \Cref{lemma:composition-mdp-and-community-same-prob} and \Cref{lemma:shortest-successful-stories},
step \ref{thm:pi-optimal-implies-orch-optimal:disjointness} is by 
disjointness of all $\T_{\orch,\stocC}(h)$ for $h\in\H^\varphi_{\orch,\stocC}$,
and step \ref{thm:pi-optimal-implies-orch-optimal:by-definition} is by \Cref{eq:prob-sat-goal}.
From this, we obtain that $\pi^* = \arg\max_{\pi'} \prob_{\M_{\pi'},s'_0}(\lozenge T)$ iff $\orch^* = \arg\max_{\orch'} \P^\stocC_{\varphi}(\orch')$.

It remains to prove that $\pi$ is cost-optimal iff $\orch$ is cost-optimal. We have:

\begin{align}
    \mathbb{E}_{\rho\sim\M'_\pi,s'_0}&[\cost_T(\rho) \mid \lozenge T] =\nonumber\\ &= \sum\limits_{\rho_T\in \Paths_{T,\pi}(s'_0)} \prob_{\M'_{\pi'},s'_0}(\Paths_{\M'_\pi,s'_0}^\omega(\rho_T)) \cdot \sum\limits_{k=0}^{|\rho_T|} C'(s'_k,a'_{k+1}) \label{thm:pi-optimal-implies-orch-optimal:mdp-cost-expectation-definition}\\
    &= \sum\limits_{\stoch\in\H^\varphi_{\orch,\stocC}} \prob_{
    \orch,\stocC}(\T_{\orch,\stocC}(\stoch)) \cdot \sum\limits_{k=0}^{|h|} C_{o_{k+1}}(\sigma_{o_k,k},a_{k+1}) \label{thm:pi-optimal-implies-orch-optimal:mdp-construction} \\
    &= \mathbb{E}_{h\sim \prob_{\orch,\stocC}}\Bigg[ \sum\limits_{k=1}^{|h|} C_{o_k}(\sigma_{o_k,k-1},a_{k}) \bigg\vert\  \successful(h) \Bigg] \label{thm:pi-optimal-implies-orch-optimal:expectation-on-histories}\\
    & = \J^\stocC_\varphi(\orch) \label{thm:pi-optimal-implies-orch-optimal:def-orch-cost-function}
\end{align}
where
step \ref{thm:pi-optimal-implies-orch-optimal:mdp-cost-expectation-definition} by definition of total expected cost conditioned on reaching of target states $T$,
step \ref{thm:pi-optimal-implies-orch-optimal:mdp-construction}
by construction of $\M'$ and by \Cref{lemma:composition-mdp-and-community-same-prob} and \Cref{lemma:shortest-successful-stories},
step \ref{thm:pi-optimal-implies-orch-optimal:expectation-on-histories} by definition of total expected cost on successful executions of $\orch$, and
step \ref{thm:pi-optimal-implies-orch-optimal:def-orch-cost-function} by \Cref{eq:min-exp-cost}.
Therefore, if $\pi\in \arg\min_{\pi'} \mathbb{E}_{\rho\sim\M'_\pi,s'_0}[\cost_T(\rho) \mid \lozenge T]$ then $\orch\in \arg\min_{\orch'} \J^\stocC_\varphi(\orch')$.
Combining both results, we get the thesis.
\end{proof}
\noindent
\textbf{Computational cost.} \Cref{thm:pi-optimal-implies-orch-optimal} guarantees that we can reduce Problem \ref{prob:stoc-comp} to the problem of finding an optimal policy for the lexicographic bi-objective optimization problem (\Cref{eq:mdp-lexicographic-objective}) over a composition MDP $\M'$.
As explained above, the two-stage technique requires solving a planning problem over MDPs.
Since it is known that both steps require polynomial time complexity in the number of states and actions of the MDP \cite{puterman1994markov} and that our Composition MDP has a state space that is a single-exponential in the size of the task specification, we get this result:
%
%
\begin{theorem}
Problem \ref{prob:stoc-comp} can be solved
    in at most exponential time in the size of the formula, 
    in at most exponential time in the number of services, 
    and in polynomial time in the size of the services.
\end{theorem}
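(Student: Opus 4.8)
The plan is to charge the running time of each of the five steps of the solution technique, to observe that everything is dominated by the size of the composition MDP $\M'$, and then to invoke \Cref{thm:pi-optimal-implies-orch-optimal} to conclude that the policy we compute induces an orchestrator that solves \Cref{prob:stoc-comp}. For \textbf{Steps 1--2} I would recall that the translation of \LTLf into automata \cite{de2013linear,BrafmanGP18} produces an \NFA $\A_\varphi$ whose number of states is at most single-exponential in $|\varphi|$ and independent of the services, and that $\A_\tk$ is just $\A_\varphi$ with transitions re-labelled over the alphabet $A\times Q$, so it has the same number of states and at most $|A|\cdot|Q|$ transitions per state. Hence Steps 1--2 take time single-exponential in $|\varphi|$ and do not depend on $\stocC$.

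For \textbf{Step 3} I would read the size of $\M'$ off its definition: $|S'| = |Q|\cdot\prod_{i=1}^{n}|\Sigma_i|$ and $|A'| = |A|\cdot|Q|\cdot n$. The factor $|Q|$ makes this single-exponential in $|\varphi|$; the $n$-fold product $\prod_i|\Sigma_i|$ makes it single-exponential in the number of services while staying polynomial — indeed linear — in the size of each individual service. Since $P'$ and $C'$ are given by a constant-time look-up into a component service and the automaton, $\M'$, together with $C'$ and the target set $T = F\times F_1\times\cdots\times F_n$, can be constructed in time polynomial in $|S'|\cdot|A'|$.

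For \textbf{Steps 4--5} I would invoke \cite{busatto2023bi}: \Cref{eq:mdp-lexicographic-objective} decomposes into a maximal-reachability-probability computation with target $T$ \cite{DBLP:phd/us/Alfaro97} followed by a stochastic-shortest-path computation on the pruned MDP \cite{puterman1994markov}, both solvable by linear programming in time polynomial in the number of states and actions of $\M'$ (and in the bit-size of the transition and cost data); the pruning step is a linear-time graph traversal; and reading the orchestrator off the optimal policy in Step 5 is immediate. Combining, the whole procedure runs in time polynomial in $|S'|\cdot|A'|$ plus the single-exponential-in-$|\varphi|$ cost of Step 1, and $|S'|\cdot|A'|$ is single-exponential in $|\varphi|$, single-exponential in $n$, and polynomial in the size of each service — exactly the claimed bound — while \Cref{thm:pi-optimal-implies-orch-optimal} certifies that the resulting orchestrator is optimal.

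The hard part will not be the size bookkeeping but making sure the two planning subroutines are charged as genuinely polynomial-time: one should cite the linear-programming formulations of maximal reachability and of stochastic shortest path, which are polynomial in the size of $\M'$, rather than value iteration, whose running time also depends on a precision parameter. The only point worth stating explicitly in the bookkeeping is which parameter carries which exponential — $|Q|$ carries the exponential in the formula, the $n$-fold product carries the exponential in the number of services, and the per-service dependence remains polynomial.
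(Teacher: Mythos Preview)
Your proposal is correct and follows essentially the same approach as the paper: the paper's justification (which is a short paragraph rather than a displayed proof) also invokes \Cref{thm:pi-optimal-implies-orch-optimal} for correctness, observes that the two-stage MDP technique is polynomial in the size of $\M'$, and notes that $\M'$ is single-exponential in the task specification. Your version is in fact more thorough---you explicitly attribute the exponential-in-$n$ factor to the product $\prod_i|\Sigma_i|$ and the exponential-in-$|\varphi|$ factor to $|Q|$, and you correctly flag that one should cite the LP formulations rather than value iteration to claim genuine polynomial time---whereas the paper leaves these points implicit.
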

\noindent
Observe that, differently from the classical setting of \LTL/\LTLf synthesis on probabilistic systems \cite{DBLP:journals/jacm/CourcoubetisY95,LTLfSynthesisonProbabilisticSystems}, and analogously to our solution method for the non-stochastic case,
we do not have unobservable ``adversarial'' nondeterminism in the composition MDP; hence, we do not need to determinize the \NFA of the task specification, thereby saving an exponential blow-up in time complexity.

\section{Example}
This section describes an example of using our stochastic framework, inspired by the ``garden bots system'' example \cite{yadav2011decision}.
The task is to $\mathit{clean}$ the garden the garden by picking fallen leaves and removing dirt, $\mathit{water}$ the plants, and $\mathit{pluck}$ the ripe fruits and flowers. The action $\mathit{clean}$ must be performed at least once, followed by $\mathit{water}$ and $\mathit{pluck}$ in any order.
In \declare\ \LTLf, the goal can be expressed as $\varphi = \mathit{clean} \wedge (\mathit{clean}\lUntil((\mathit{water}\land\Next\mathit{pluck}\lor(\mathit{pluck}\land\Next\mathit{water})))$.
We assume there are three available garden bots, each with different capabilities and rewards. In \Cref{fig:example} the three services specifications and the \DFA of the \LTLf goal are shown. Transitions labels are of the form $\langle \mathit{action}, \mathit{prob}, \mathit{reward}\rangle$.
We are interested in a composition of the bots to satisfy the goal $\varphi$.
In the nondeterministic case, bot 1 will be used to perform $\mathit{clean}$, while either bot 2 or 3 will be used for $\mathit{pluck}$, followed by $\mathit{water}$ (or the other way around).
Moreover, bot 1 might need to be emptied, which cannot be known beforehand due to nondeterminism, and must be handled by the solution.
In the stochastic case, we also have to take into account rewards/costs.
Despite being less costly than other bots, the optimal orchestrator cannot use $\mathit{clean}$ of bot 1 indefinitely since the lexicographic constraint obliges to pursue the satisfaction of the goal.
The stochasticity of bot 1 must be taken into account in order to have the right total cost estimate.
Moreover, we should prefer bot 3 for $\mathit{pluck}$ because the $\mathit{empty}$ of bot 2 costs too much ($-5$).





\begin{figure}[t]
  \centering
  \begin{subfigure}{0.3\textwidth}
\begin{tikzpicture}
    \node[state,initial,initial text=,minimum size=0.5cm] (g0) { \( g_0 \) };
    \node[state, below=1.25cm of g0,minimum size=0.5cm] (g1) { \( g_1 \) };
    \node[state, right=0.25cm of g0,minimum size=0.5cm] (g2) { \( g_2 \) };
    \node[state, right=0.5cm of g1,minimum size=0.5cm] (g3) { \( g_3 \) };
    \node[state,accepting, below right=0.5cm and 0.25cm of g2,minimum size=0.5cm] (g4) { \( g_4 \) };

    \path[->] (g0) edge[] node[left,rotate=90,xshift=0.32cm,yshift=0.2cm] {{\footnotesize $\mathit{clean}$}} (g1);
    \path[->] (g1) edge[loop left] node[above,xshift=0cm,yshift=-0.42cm] {{\footnotesize $\mathit{clean}$}} (g1);
    \path[->] (g1) edge[] node[left,rotate=60,xshift=0.4cm, yshift=0.14cm] {{\footnotesize $\mathit{water}$}} (g2);
    \draw[->] (g1) edge[bend right] node[below,xshift=-0.05,yshift=-0.05cm] {{\footnotesize $\mathit{pluck}$}} (g3);

    \draw[->] (g3) edge[bend right] node[below,rotate=45,xshift=-0.1cm,yshift=-0.175cm] {{\footnotesize $\mathit{water}$}} (g4);
    \draw[->] (g2) edge[bend left] node[above,rotate=-45,yshift=0.1cm] {{\footnotesize $\mathit{pluck}$}} (g4);
    \draw[->] (g4) edge[loop below,rotate=60] node[below right] {{\footnotesize $\top$}} (g4);
\end{tikzpicture}
\label{fig:example-1:dfa}
\caption{The \DFA of $\varphi$.}
\end{subfigure}
  \begin{subfigure}{0.21\textwidth}
      \begin{tikzpicture}[shorten >=1pt,node distance=2.5cm,on grid,auto,initial text=,every node/.style={inner sep=0,outer sep=0}]
        \node[state,accepting,initial,initial text=,minimum size=0.75cm] (a0) { $a_0$ };
        \node[state, below=of a0,minimum size=0.75cm] (a1) { \( a_1 \) };
        \path[->] (a0) edge[bend right] node[left,rotate=-90,xshift=0.9cm,yshift=-0.15cm] {{\scriptsize $\langle \mathit{clean}, 0.2,-0.1 \rangle$}} (a1);
        \path[->] (a0) edge[loop right] node[above,xshift=0.1cm,yshift=0.42cm] {{\scriptsize $\langle \mathit{clean},0.8,-0.1 \rangle$}} (a0);
        \path[->] (a1) edge[bend right] node[right,rotate=90,xshift=-0.9cm,yshift=-0.15cm] {{\scriptsize $\langle\mathit{empty},1,-0.1\rangle$}} (a0);
    \end{tikzpicture}
    \caption{Bot 1.}
    \label{fig:example-1:bot-1}
\end{subfigure}
\begin{subfigure}{0.21\textwidth}
\begin{tikzpicture}[shorten >=1pt,node distance=2.5cm,on grid,auto,initial text=,every node/.style={inner sep=0,outer sep=0}]
    \node[state,accepting,initial,initial text=,minimum size=0.75cm] (b0) { \( b_0 \) };
    \node[state, below=of b0,minimum size=0.75cm] (b1) { \( b_1 \) };
    \path[->] (b0) edge[bend right] node[left,rotate=-90,xshift=0.65cm,yshift=-0.1cm] {{\scriptsize $\langle\mathit{pluck},1,-1\rangle$}} (b1);
    \path[->] (b0) edge[loop right] node[above,xshift=0.0cm,yshift=0.42cm] {{\scriptsize $\langle\mathit{water},1,-1\rangle$}} (b0);
    \path[->] (b1) edge[bend right] node[right,rotate=90,xshift=-0.7cm,yshift=-0.1cm] {{\scriptsize $\langle\mathit{empty},1,-5\rangle$}} (b0);
    \path[->] (b1) edge[loop right] node[xshift=-0.4cm,yshift=0.3cm] {{\scriptsize $\langle\mathit{water},1,-1\rangle$}} (b1);
\end{tikzpicture}
\caption{Bot 2.}
\label{fig:example-1:bot-2}
\end{subfigure}
\begin{subfigure}{0.21\textwidth}
\begin{tikzpicture}[shorten >=1pt,node distance=2.5cm,on grid,auto,initial text=,every node/.style={inner sep=0,outer sep=0}]
    \node[state, accepting,initial,initial text=,minimum size=0.75cm] (c0) { \( c_0 \) };
    \node[state, below=of c0,minimum size=0.75cm] (c1) { \( c_1 \) };
    
    \path[->] (c0) edge[bend right] node[rotate=-90,xshift=-0.9cm,yshift=-0.1cm] {{\scriptsize $\langle\mathit{pluck},1,-1\rangle$}} (c1);
    \path[->] (c1) edge[bend right] node[rotate=90,xshift=0.9cm,yshift=-0.1cm] {{\scriptsize $\langle\mathit{empty},1,-1\rangle$}} (c0);
\end{tikzpicture}
\caption{Bot 3.}
\end{subfigure}
  \caption{The garden bot systems and the \DFA of the \LTLf goal.}
  \label{fig:example}
\end{figure}
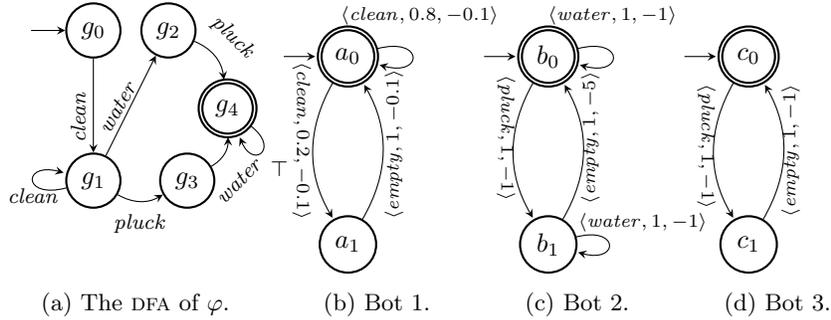

\section{Related Works}
\label{sec:related}
We discuss related works to our framework in two parts: one for the nondeterministic case and the other for the stochastic case.

\noindent
\textbf{Nondeterministic service composition.}
In the nondeterministic case, our framework is closely related to the problem of fully observable non-deterministic planning for \LTLf goals \cite{DeGiacomoR18,DBLP:conf/aaai/CamachoTMBM17}.
However, there are two main differences. \myi The first difference is that, in their case, the \LTLf formula is interpreted over trajectories of states (where a state is a set of fluents). In contrast, in our case, the formula is evaluated over the sequences of actions chosen by the orchestrator. This is because we want our framework to be \emph{domain-agnostic} from the point of view of the task specification; in other words, the specification should not deal with the actual working of the services, hence achieving a better modularization and separation of concerns within the composition framework (note, this is also the difference with the classical \LTLf synthesis problem \cite{DBLP:conf/ijcai/GiacomoV15}).
\myii In their case, the nondeterminism comes from the planning domain, while in our case, it comes from the nondeterministic services' behaviour.
Furthermore, the solution technique proposed in \cite{DBLP:conf/aaai/CamachoTMBM17} works by providing a particular PDDL encoding extending the original domain with auxiliary action to progress the formula's DFA state, while in \cite{DeGiacomoR18}, the authors rely on adversarial reachability on a classical cartesian product between the domain's DFA and the formula's DFA; instead, our solution cannot rely on a first-order specification as PDDL, and the automata-based construction is more complex since it has to handle the asynchronous progression of the services' states.
Planning methods able to cope with non-determinism have been widely applied in the literature of web service composition, as reviewed in \cite{DBLP:journals/aires/MarkouR16}, although most of the methods there do not support stateful services.
In \cite{DBLP:journals/ws/SirinPWHN04} Hierarchical Task Network (HTN) planning is used for service composition.
HTN planning is based on the notion of composite tasks that can be refined to atomic tasks using predefined methods. While based on high-level specification of services, their approach does not support nondeterministic services.
The work \cite{DBLP:conf/iscc/AlvesMFD16} allows the modeling of nondeterministic behaviours but not of stateful services nor high-level temporal goal specification.
Authors in~\cite{pistore2005automated} describe services as atomic actions where only I/O behavior is modeled, and the ontology is constituted by propositions and actions; hence services are not stateful as in our case. Moreover, in our case, we can adopt the \declare paradigm by relying on \LTLf and automata-based reasoning.
There is a rich literature on the Roman Model for service composition \cite{berardi2003automatic}, and its extension to nondeterministic services' behaviours \cite{DBLP:conf/icsoc/BerardiCGM05,DBLP:conf/aaai/GiacomoFPS10,DBLP:conf/atal/GiacomoPS10,DBLP:conf/aips/RamirezYS13}.
However, the crucial difference is that, while in previous works, the orchestrator must be able to orchestrate the target service \emph{for every} possible request (safety goal), in our framework, the objective is to \emph{look for the existence} (reachability goal) of a strategy such that the achievement of the goal is guaranteed.
%





\noindent
\textbf{Stochastic service composition.}
In the context of the Roman model, \cite{yadav2011decision,brafman2017service} propose a solution for the service composition in stochastic settings, defining the non-deterministic behaviour of the target service. 
In their model, an optimal solution can be found by solving a Markov Decision Problem derived from the services and requirement specifications. We note that the authors of \cite{brafman2017service}, although able to orchestrate the services, did not capture the non-deterministic behaviours of the available services, and both works do not consider the rewards (or costs) of using a particular service.
Another important aspect that we
introduce is the use of a lexicographic bi-objective optimization to manage the orchestration, which
allows us to define a strict preference of the task rewards over the services' ones. The closest works on this aspect
are \cite{DBLP:conf/bpm/GiacomoFLMS21,DBLP:conf/ijcai/GiacomoFLMS22,DEGIACOMO2023103916}, but the crucial difference is that they did not consider objective priorities as strict, but a composition MOMDP via a linear combination of weights.
One of the earliest works that combined stochastic planning models with service composition is \cite{DBLP:conf/waim/GaoYTZ05}.
There are works based on Markov-HTN Planning
\cite{DBLP:conf/icws/ChenXR09}, multi-objective optimization \cite{DBLP:conf/icsoc/MoustafaZ13,DBLP:journals/tcc/ChenHLS19},
and lexicographic optimization \cite{sadeghiram2020user}, helpful to model the stochastic behaviour as well as complex QoS preferences. However, in all cases, either there are no stateful services or no high-level declarative specification of the desired solution.

\section{Discussion and Conclusion}

In this paper, we have studied an advanced form of task-oriented compositions of both nondeterministic and stochastic services. 
Recently, service composition has become very relevant in smart manufacturing \cite{DBLP:journals/ai/GiacomoFLPS22,DBLP:conf/aaai/GiacomoVFAL18}. 
In particular, the use of service composition has been advocated in a Digital Twins (DT) scenario, in \cite{DEGIACOMO2023103916}, where the composition of a target service by means of a community of stochastic services is considered. 
Their framework, inspired by \cite{brafman2017service} is similar to our stochastic framework. Among similarities, their available services also exhibit stochastic behaviour and the utilization utility is modelled using a reward function rather than a cost function, as in our case.
%
However, in their case, the goal is to maximize the total expected sum of rewards coming from the dispatching of the target service's requests to the available services. 
Note that the probabilities and the cost of this framework are updated over time, as could happen in a realistic industrial scenario due to continuous use of the services.
%
Despite this, their solution method does not guarantee the target's rewards objective has higher priority than the services' rewards.
Instead, thanks to the lexicographic bi-objective, we are able to guarantee that the utilization cost is never preferred if it conflicts with a higher probability of satisfaction with the task specification.
%
We can find an example of this utilization in \cite{de2023aida} where the authors, through a software prototype, show a manufacturing process concerning an electric motor formalized as Lexicographic Markov Decision Process (LMDP). They define the vector of reward functions formed by two objectives: cost and product quality. The optimal solution they calculate aims to minimize the cost and maximize the quality.
Crucially, both in the nondeterministic and stochastic framework, in our setting, the orchestrator has control over the choice of actions to fulfil the task (as in AI Planning), while in their setting, the orchestrator is more similar to one in the original Roman model \cite{berardi2003automatic}.
%
We believe it would be quite fruitful to apply our stochastic framework to their context, and we will investigate it in the future.

It is important to notice that in this work, the task specification has been expressed in \LTLf only using actions which are under the control of the orchestrator. As a result, one can obtain a \DFA for expressing the action choices the orchestrator has to fulfil the task directly (in linear time) from the \NFA corresponding to the formula. One possible extension is to consider feedback from the services regarding a common set of observations. In principle, our techniques can be extended to that case as well. However, this time we cannot assign the nondeterministic choice to the controller (the orchestrator) as we have done here, and this would require a blow-up that can be worst-case exponential in obtaining the analogous of the \DFA above, see, e.g., \cite{DBLP:conf/ijcai/GiacomoV15,BrafmanGP18}.

\section*{Acknowledgements}
This work has been partially supported by the EU H2020 project AIPlan4EU (No. 101016442), the ERC-ADG WhiteMech (No. 834228), the EU ICT-48 2020 project TAILOR (No. 952215), the PRIN project RIPER (No. 20203FFYLK), and the PNRR MUR project FAIR (No. PE0000013).

\printbibliography

@inproceedings{de2023aida,
  title={AIDA: A Tool for Resiliency in Smart Manufacturing},
  author={De Giacomo, Giuseppe and Favorito, Marco and Leotta, Francesco and Mecella, Massimo and Monti, Flavia and Silo, Luciana},
  booktitle={CAiSE},
  year={2023}
}

@article{papazoglou2007service,
  title={Service-oriented computing: State of the art and research challenges},
  author={Papazoglou, Michael P and Traverso, Paolo and Dustdar, Schahram and Leymann, Frank},
  journal={Computer},
  volume={40},
  number={11},
  pages={38--45},
  year={2007},
  publisher={IEEE}
}

@inproceedings{berardi2003automatic,
  title={Automatic composition of e-services that export their behavior},
  author={Berardi, Daniela and Calvanese, Diego and {De Giacomo}, Giuseppe and Lenzerini, Maurizio and Mecella, Massimo},
  booktitle={ICSOC},
  year={2003},
}

@inproceedings{brafman2017service,
  title={Service Composition in Stochastic Settings},
  author={Brafman, Ronen I and {De Giacomo}, Giuseppe and Mecella, Massimo and Sardina, Sebastian},
  booktitle={AIxIA},
  year={2017},
}

@book{puterman1994markov,
  author    = {Martin L. Puterman},
  title     = {Markov Decision Processes},
  year      = {1994}
}

@incollection{degiacomo2014automated,
  title={Automated service composition based on behaviors: The {R}oman model},
  author={{De Giacomo}, Giuseppe and Mecella, Massimo and Patrizi, Fabio},
  booktitle={Web services foundations},
  year={2014},
}

@inproceedings{de2013linear,
  title={Linear temporal logic and linear dynamic logic on finite traces},
  author={{De Giacomo}, Giuseppe and Vardi, Moshe Y},
  booktitle={IJCAI},
  year={2013}
}

@inproceedings{pesic2007declare,
  title={Declare: Full support for loosely-structured processes},
  author={Pesic, Maja and Schonenberg, Helen and Van der Aalst, Wil MP},
  booktitle={EDOC},
  year={2007},
}

@inproceedings{yadav2011decision,
  title={Decision theoretic behavior composition},
  author={Yadav, Nitin and Sardina, Sebastian},
  booktitle={AAMAS},
  year={2011}
}

@article{de2013automatic,
  title={Automatic behavior composition synthesis},
      author={{De Giacomo}, Giuseppe and Patrizi, Fabio and Sardina, Sebastian},
  journal={Artificial Intelligence},
  volume={196},
  pages={106--142},
  year={2013},
  publisher={Elsevier}
}

@inproceedings{pistore2005automated,
  title={Automated composition of web services by planning at the knowledge level},
  author={Pistore, Marco and Marconi, Annapaola and Bertoli, Piergiorgio and Traverso, Paolo},
  booktitle={IJCAI},
  year={2005}
}

@inproceedings{DBLP:conf/bpm/GiacomoFLMS21,
  author       = {Giuseppe {De Giacomo} and
                  Marco Favorito and
                  Francesco Leotta and
                  Massimo Mecella and
                  Luciana Silo},
  title        = {Digital Twins Composition via Markov Decision Processes},
  booktitle    = {ITBPM@BPM},
  series       = {{CEUR} Workshop Proceedings},
  volume       = {2952},
  pages        = {44--49},
  publisher    = {CEUR-WS.org},
  year         = {2021}
}

@inproceedings{DBLP:conf/ijcai/GiacomoFLMS22,
  author       = {Giuseppe {De Giacomo} and
                  Marco Favorito and
                  Francesco Leotta and
                  Massimo Mecella and
                  Luciana Silo},
  title        = {Modeling resilient cyber-physical processes and their composition
                  from digital twins via Markov Decision Processes},
  booktitle    = {PMAI@IJCAI},
  series       = {{CEUR} Workshop Proceedings},
  volume       = {3310},
  pages        = {101--104},
  publisher    = {CEUR-WS.org},
  year         = {2022}
}

@inproceedings{LTLfSynthesisonProbabilisticSystems,
  author       = {Andrew M. Wells and
                  Morteza Lahijanian and
                  Lydia E. Kavraki and
                  Moshe Y. Vardi},
  title        = {LTLf Synthesis on Probabilistic Systems},
  booktitle    = {GandALF},
  series       = {{EPTCS}},
  volume       = {326},
  year         = {2020}
}

@misc{DBLP:phd/us/Alfaro97,
  author       = {Luca de Alfaro},
  title        = {Formal verification of probabilistic systems},
  year         = {1997},
  howpublished = {Ph.D. Thesis},
}

@inproceedings{DBLP:conf/icsoc/BerardiCGM05,
  author       = {Daniela Berardi and
                  Diego Calvanese and
                  Giuseppe {De Giacomo} and
                  Massimo Mecella},
  title        = {Composition of Services with Nondeterministic Observable Behavior},
  booktitle    = {{ICSOC}},
  year         = {2005}
}

@inproceedings{DBLP:conf/ijcai/GiacomoV15,
  author       = {Giuseppe {De Giacomo} and
                  Moshe Y. Vardi},
  title        = {Synthesis for {LTL} and {LDL} on Finite Traces},
  booktitle    = {{IJCAI}},
  year         = {2015}
}

@article{busatto2023bi,
  title={Bi-Objective Lexicographic Optimization in Markov Decision Processes with Related Objectives},
  author={Busatto-Gaston, Damien and Chakraborty, Debraj and Majumdar, Anirban and Mukherjee, Sayan and P{\'e}rez, Guillermo A and Raskin, Jean-Fran{\c{c}}ois},
  journal={arXiv preprint arXiv:2305.09634},
  year={2023}
}

@article{DBLP:journals/jacm/CourcoubetisY95,
  author       = {Costas Courcoubetis and
                  Mihalis Yannakakis},
  title        = {The Complexity of Probabilistic Verification},
  journal      = {J. {ACM}},
  volume       = {42},
  number       = {4},
  pages        = {857--907},
  year         = {1995}
}

@book{GeffnerBonet13,
	author = {Hector Geffner and Blai Bonet},
	title = {A Concise Introduction to Models and Methods for Automated Planning},
	year = {2013},
	publisher = {Morgan \& Claypool Publishers}
}

@inproceedings{McIlraithS02,
  author       = {Sheila A. McIlraith and
                  Tran Cao Son},
  title        = {Adapting Golog for Composition of Semantic Web Services},
  booktitle    = {{KR}},
  pages        = {482--496},
  publisher    = {Morgan Kaufmann},
  year         = {2002}
}

@article{ABPMmanifesto23,
  author       = {Marlon Dumas and
                  Fabiana Fournier and
                  Lior Limonad and
                  Andrea Marrella and
                  Marco Montali and
                  Jana{-}Rebecca Rehse and
                  Rafael Accorsi and
                  Diego Calvanese and
                  Giuseppe {De Giacomo} and
                  Dirk Fahland and
                  Avigdor Gal and
                  Marcello La Rosa and
                  Hagen V{\"{o}}lzer and
                  Ingo Weber},
  title        = {AI-augmented Business Process Management Systems: {A} Research Manifesto},
  journal      = {{ACM} Trans. Manag. Inf. Syst.},
  volume       = {14},
  number       = {1},
  pages        = {11:1--11:19},
  year         = {2023}
}

@inproceedings{BrafmanGP18,
  author       = {Ronen I. Brafman and
                  Giuseppe {De Giacomo} and
                  Fabio Patrizi},
  title        = {LTLf/LDLf Non-Markovian Rewards},
  booktitle    = {{AAAI}},
  year         = {2018}
}

@inproceedings{DBLP:conf/aaai/GiacomoVFAL18,
  author       = {Giuseppe {De Giacomo} and
                  Moshe Y. Vardi and
                  Paolo Felli and
                  Natasha Alechina and
                  Brian Logan},
  title        = {Synthesis of Orchestrations of Transducers for Manufacturing},
  booktitle    = {{AAAI}},
  year         = {2018}
}

@article{DBLP:journals/ai/GiacomoFLPS22,
  author       = {Giuseppe {De Giacomo} and
                  Paolo Felli and
                  Brian Logan and
                  Fabio Patrizi and
                  Sebastian Sardi{\~{n}}a},
  title        = {Situation calculus for controller synthesis in manufacturing systems
                  with first-order state representation},
  journal      = {Artif. Intell.},
  volume       = {302},
  pages        = {103598},
  year         = {2022}
}

@inproceedings{DeGiacomoR18,
  author       = {Giuseppe {De Giacomo} and
                  Sasha Rubin},
  title        = {Automata-Theoretic Foundations of {FOND} Planning for LTLf and LDLf
                  Goals},
  booktitle    = {{IJCAI}},
  pages        = {4729--4735},
  publisher    = {ijcai.org},
  year         = {2018}
}

@inproceedings{DBLP:conf/aaai/GiacomoFPS10,
  author       = {Giuseppe {De Giacomo} and
                  Paolo Felli and
                  Fabio Patrizi and
                  Sebastian Sardi{\~{n}}a},
  title        = {Two-Player Game Structures for Generalized Planning and Agent Composition},
  booktitle    = {{AAAI}},
  year         = {2010}
}

@inproceedings{DBLP:conf/atal/GiacomoPS10,
  author       = {Giuseppe {De Giacomo} and
                  Fabio Patrizi and
                  Sebastian Sardi{\~{n}}a},
  title        = {Agent programming via planning programs},
  booktitle    = {{AAMAS}},
  year         = {2010}
}

@inproceedings{DBLP:conf/aaai/CamachoTMBM17,
  author       = {Alberto Camacho and
                  Eleni Triantafillou and
                  Christian J. Muise and
                  Jorge A. Baier and
                  Sheila A. McIlraith},
  title        = {Non-Deterministic Planning with Temporally Extended Goals: {LTL} over
                  Finite and Infinite Traces},
  booktitle    = {{AAAI}},
  pages        = {3716--3724},
  publisher    = {{AAAI} Press},
  year         = {2017}
}

@article{DBLP:journals/ws/SirinPWHN04,
  author       = {Evren Sirin and
                  Bijan Parsia and
                  Dan Wu and
                  James A. Hendler and
                  Dana S. Nau},
  title        = {{HTN} planning for Web Service composition using {SHOP2}},
  journal      = {J. Web Semant.},
  year         = {2004}
}

@article{DBLP:journals/aires/MarkouR16,
  author       = {George Markou and
                  Ioannis Refanidis},
  title        = {Non-deterministic planning methods for automated web service composition},
  journal      = {Artif. Intell. Res.},
  volume       = {5},
  number       = {1},
  pages        = {14--35},
  year         = {2016}
}

@article{DEGIACOMO2023103916,
title = {Digital twin composition in smart manufacturing via Markov decision processes},
journal = {Computers in Industry},
volume = {149},
year = {2023},
author = {Giuseppe {De Giacomo} and Marco Favorito and Francesco Leotta and Massimo Mecella and Luciana Silo},
keywords = {Service composition, Roman model, Digital twins, Industry 4.0, Smart manufacturing, Markov decision processes}
}

@inproceedings{DBLP:conf/iscc/AlvesMFD16,
  author       = {Jhonatan Alves and
                  Jerusa Marchi and
                  Renato Fileto and
                  Mario A. R. Dantas},
  title        = {Resilient composition of Web services through nondeterministic planning},
  booktitle    = {{ISCC}},
  pages        = {895--900},
  publisher    = {{IEEE} Computer Society},
  year         = {2016}
}

@inproceedings{DBLP:conf/waim/GaoYTZ05,
  author       = {Aiqiang Gao and
                  Dongqing Yang and
                  Shiwei Tang and
                  Ming Zhang},
  title        = {Web Service Composition Using Markov Decision Processes},
  booktitle    = {{WAIM}},
  year         = {2005}
}

@inproceedings{DBLP:conf/icws/ChenXR09,
  author       = {Kun Chen and
                  Jiuyun Xu and
                  Stephan Reiff{-}Marganiec},
  title        = {Markov-HTN Planning Approach to Enhance Flexibility of Automatic Web Service Comp...},
  booktitle    = {{ICWS}},
  year         = {2009}
}

@inproceedings{DBLP:conf/icsoc/MoustafaZ13,
  author       = {Ahmed Moustafa and
                  Minjie Zhang},
  title        = {Multi-Objective Service Composition Using Reinforcement Learning},
  booktitle    = {{ICSOC}},
  year         = {2013}
}

@article{DBLP:journals/tcc/ChenHLS19,
  author       = {Ying Chen and
                  Jiwei Huang and
                  Chuang Lin and
                  Xuemin Shen},
  title        = {Multi-Objective Service Composition with QoS Dependencies},
  journal      = {{IEEE} Trans. Cloud Comput.},
  year         = {2019}
}

@inproceedings{sadeghiram2020user,
  title={A user-preference driven lexicographic approach for multi-objective distributed Web service composition},
  author={Sadeghiram, Soheila and Ma, Hui and Chen, Gang},
  booktitle={2020 IEEE Symposium Series on Computational Intelligence (SSCI)},
  year={2020},
}

@inproceedings{DBLP:conf/aips/RamirezYS13,
  author       = {Miquel Ram{\'{\i}}rez and
                  Nitin Yadav and
                  Sebastian Sardi{\~{n}}a},
  title        = {Behavior Composition as Fully Observable Non-Deterministic Planning},
  booktitle    = {{ICAPS}},
  publisher    = {{AAAI}},
  year         = {2013}
}

\end{document}